\newcommand*\LyXThinSpace{\,\hspace{0pt}}
\providecommand{\tabularnewline}{\\}
\theoremstyle{plain}
\newtheorem{assumption}{\protect\assumptionname}
\theoremstyle{definition}
\newtheorem{defn}{\protect\definitionname}
\theoremstyle{plain}
\newtheorem{lem}{\protect\lemmaname}
\theoremstyle{plain}
\newtheorem{thm}{\protect\theoremname}
\providecommand{\assumptionname}{Assumption}
\providecommand{\definitionname}{Definition}
\providecommand{\lemmaname}{Lemma}
\providecommand{\theoremname}{Theorem}
\begin{document}
\title{Collaborative Spacecraft Servicing under Partial Feedback using Lyapunov-based
Deep Neural Networks}

\author*[1]{\fnm{Cristian F.} \sur{Nino}}\email{cristian1928@ufl.edu}
\author[1]{\fnm{Omkar Sudhir} \sur{Patil}}\email{patilomkarsudhir@ufl.edu}
\equalcont{These authors contributed equally to this work.}
\author[1]{\fnm{Christopher D.} \sur{Petersen}}\email{c.petersen1@ufl.edu}
\equalcont{These authors contributed equally to this work.}
\author[2]{\fnm{Sean} \sur{Phillips}}\email{sean.phillips.9@spaceforce.mil}
\equalcont{These authors contributed equally to this work.}
\author[1]{\fnm{Warren E.} \sur{Dixon}}\email{wdixon@ufl.edu}
\equalcont{These authors contributed equally to this work.}

\affil*[1]{\orgdiv{Department of Mechanical and Aerospace Engineering}, \orgname{University of Florida}, \orgaddress{\city{Gainesville}, \state{FL}, \postcode{32611}, \country{USA}}}
\affil[2]{\orgdiv{Air Force Research Laboratory, Space Vehicles Directorate}, \orgaddress{\city{Kirtland AFB}, \state{NM}, \postcode{87117}, \country{USA}}}

\abstract{Multi-agent systems are increasingly applied in space missions, including distributed space systems, resilient constellations, and autonomous rendezvous and docking operations. A critical emerging application is collaborative spacecraft servicing, which encompasses on-orbit maintenance, space debris removal, and swarm-based satellite repositioning. These missions involve servicing spacecraft interacting with malfunctioning or defunct spacecraft under challenging conditions, such as limited state information, measurement inaccuracies, and erratic target behaviors. Existing approaches often rely on assumptions of full state knowledge or single-integrator dynamics, which are impractical for real-world applications involving second-order spacecraft dynamics.
This work addresses these challenges by developing a distributed state estimation and tracking framework that requires only relative position measurements and operates under partial state information. A novel $\rho$-filter is introduced to reconstruct unknown states using locally available information, and a Lyapunov-based deep neural network adaptive controller is developed that adaptively compensates for uncertainties stemming from unknown spacecraft dynamics. To ensure the collaborative spacecraft regulation problem is well-posed, a trackability condition is defined. A Lyapunov-based stability analysis is provided to ensure exponential convergence of errors in state estimation and spacecraft regulation to a neighborhood of the origin under the trackability condition. The developed method eliminates the need for expensive velocity sensors or extensive pre-training, offering a practical and robust solution for spacecraft servicing in complex, dynamic environments.}

\keywords{Nonlinear Control, Adaptive Control, Multi-Agent Control, Neural Networks}

\maketitle

\section{Introduction}

Multi-agent systems play a role in space applications such as distributed
space systems, resilient constellations, and autonomous rendezvous
and docking operations (e.g., \cite{Guffanti2023,Arnas2022,Soderlund2023}).
A rapidly advancing area within this domain is collaborative spacecraft
servicing, which includes on-orbit maintenance, space debris removal,
and swarm-based satellite repositioning (e.g., \cite{Soderlund2023,Mercier2024,Bang2019}).
While current multi-agent spacecraft servicing scenarios often involve
a small number of agents, typically between 2 to 5 spacecraft (e.g.,
\cite{Burch.Moore.ea2016}), future missions are expected to comprise
larger constellations of 10 to 20 or more agents, necessitating the
development of decentralized control strategies that can efficiently
coordinate and adapt to the complexities of these systems (e.g., \cite{Vaughan.Kelley.ea2022}).
The adoption of collaborative and decentralized approaches in these
missions offers several benefits, including enhanced reliability,
reduced costs, and improved safety, as multiple servicing spacecraft
can work together to accomplish complex tasks while minimizing the
risk of single-point failures. These missions often involve multiple
servicing spacecraft performing tasks such as approaching, inspecting,
or rendezvousing with a non-functional or malfunctioning spacecraft
that may be in an uncontrolled tumble or exhibiting anomalous behavior.

The space environment introduces unique challenges for these scenarios.
These include limited knowledge of the defunct spacecraft, reduced
state accuracy due to measurement errors, and partial or unavailable
state measurements (e.g., \cite{Asri2024,Cho2016}). Such constraints
necessitate methods for reconstructing the state of the spacecraft
using incomplete information. Furthermore, the characterization of
defunct spacecraft in complex servicing scenarios is often complicated
by orbital dynamics, poor lighting conditions, unknown physical properties,
and erratic behaviors (e.g., \cite{Kim2012,Morgan2012,Sun2017,Sun.Geng.ea2022}).
In such cases, state estimation alone may prove inadequate, requiring
the additional capability to learn the dynamics of the defunct spacecraft
to predict its behavior and devise effective servicing strategies.

Neural networks (NNs) are frequently employed to approximate the unstructured
uncertainty inherent in spacecraft dynamics, particularly in scenarios
where classical physics-based models may be insufficient or impractical
(e.g., \cite{Zhang2018,Harl2013,Park2023,Silvestrini2020}). In contrast
to white-box approaches, which rely on explicit physical equations,
NNs offer a powerful black-box paradigm that can estimate unknown
dynamics under challenging space weather conditions, including low
Earth orbit, electromagnetic interactions with unknown gravitational
fields, uncertain gravitational perturbations and atmospheric drag.
The ability of NNs to learn complex patterns and relationships in
data makes them well-suited for capturing nonlinearities and uncertainties
that arise in spacecraft systems. Furthermore, recent advances in
deep neural networks (DNNs) have significantly improved function approximation
capabilities, enabling more accurate estimation of uncertain dynamics
(e.g., \cite{LeCun2015}). 

To mitigate the uncertainties inherent in complex systems, traditional
machine learning approaches typically rely on offline training of
NNs using pre-collected datasets. However, this methodology has several
limitations: the required datasets can be difficult to obtain, may
not accurately reflect the operating conditions of the environment,
and fail to adapt online to discrepancies between the pretrained data
and actual system behavior. In contrast, adaptive control offers a
promising alternative by enabling real-time estimation of unknown
model parameters while providing stability guarantees for the system.
While shallow NNs have enabled online adaptation for decades, DNNs
offer enhanced performance, but often require extensive offline pre-training,
which can be challenging in dynamic environments. However, recent
breakthroughs in Lyapunov-based DNNs (Lb-DNNs) overcome the challenges
associated with nonlinear nested uncertain parameters (e.g., \cite{Patil.Le.ea2022,Patil.Le.ea.2022a,Nino.Patil.ea2023,Hart.patil.ea2023,Griffis.Patil.ea2024}),
enabling the construction of analytically-derived update laws based
on Lyapunov-based stability analysis, which provide convergence and
boundedness guarantees while allowing for real-time adaptation without
pre-training requirements. Several results have demonstrated the effectiveness
of adaptive NNs in spacecraft rendezvous applications, including estimating
nonlinear dynamical models in the presence of $J_{2}$ perturbations
and orbit uncertainty estimation (e.g., \cite{Zhang2018,Harl2013}),
and uncertain spacecraft dynamics (e.g., \cite{Park2023,Silvestrini2020}).

The work in \cite{Nino.Patil.ea2025} pioneered the use of Lb-DNNs
for distributed state estimation and tracking in multi-agent systems,
while also introducing the notion of trackability. To illustrate trackability,
consider a network of fixed cameras observing a dynamic object. Each
camera can capture only a subset of the information required to fully
determine the object\textquoteright s state. Independent estimation
by each camera results in incomplete information. By enabling the
cameras to share their individual estimates, the network forms a collaborative
framework that leverages collective information. When generalized
to multi-agent systems, this approach leads to the concept of trackability,
which quantifies the richness of information required for accurate
state estimation in a decentralized system.

For example, in a two-dimensional system with two agents tracking
a single target, if each agent measures only one degree of freedom
and lacks communication with the other, the agents can only align
collinearly with the target. However, successful tracking requires
the agents to share partial measurements to achieve complete state
reconstruction. This concept is particularly relevant for multi-agent
spacecraft servicing, where inter-agent communication can significantly
enhance state estimation and servicing capabilities.

While \cite{Nino.Patil.ea2025} demonstrated exponential state estimation
and tracking within a neighborhood of the object of interest, it relied
on assumptions that are impractical for spacecraft servicing, such
as single-integrator dynamics and full relative state information.
Extending these results to second-order spacecraft dynamics while
assuming full relative velocity information is unrealistic, as relative
velocity sensors are often costly and energy-intensive. Therefore,
there is a need for a distributed observer capable of reconstructing
relative velocities using only relative position measurements.

This work addresses the challenges of spacecraft servicing by eliminating
the restrictive assumptions in \cite{Nino.Patil.ea2025}. The developed
method relies solely on relative position measurements and accommodates
partial state information of the defunct spacecraft. A novel $\rho$-filter
is developed, extending the work in \cite{Nino.Patil.Dixon.2023},
to reconstruct unknown states using locally available information.
Through Lyapunov-based stability analysis, the method guarantees exponential
convergence of the defunct spacecraft\textquoteright s state estimates
and regulates the servicing spacecraft to a neighborhood of the defunct
satellite's state, provided the trackability condition is satisfied.

\section{Notation and Preliminaries}

The study of multi-agent systems involves challenges due to the complexity
arising from multiple interacting agents and nonlinear dynamics. To
address these challenges effectively, several mathematical tools and
notational conventions are introduced in this section. Linear algebra
provides a compact and efficient way to represent agent states, interactions,
and transformations using vectors and matrices, while eigenvalue analysis
supports stability and convergence studies. The networked nature of
multi-agent systems is naturally modeled using graphs, and algebraic
graph theory offers powerful methods to analyze key properties such
as consensus, stability, and connectivity. Finally, deep neural networks
play an increasingly important role in adaptive control of multi-agent
systems, leveraging their ability to approximate complex mappings
and learn from data in uncertain and dynamic environments. These tools
and notations form the foundational elements necessary for modeling,
analysis, and control in the context of multi-agent systems.

Denote by $\mathbf{1}_{n}\in\mathbb{R}^{n}$ the column vector of
length $n>1$ whose entries are all ones. Similarly, $\mathbf{0}_{n}\in\mathbb{R}^{n}$
denotes the column vector of length $n>1$ whose entries are all zeroes.
For $m,n>1$, $\mathbf{0}_{m\times n}\in\mathbb{R}^{m\times n}$ denotes
the $m\times n$ zero matrix. The $p\times p$ identity matrix and
the $p\times1$ column vector of ones are denoted by $I_{p}$ and
$1_{p}$, respectively. Given $M\in\mathbb{Z}_{>0}$, the enumeration
operation $[\cdot]$ is defined as $[M]\triangleq\{1,2,...,M\}$.
Let $n,m\in\mathbb{Z}_{>0}$ with $m>n$. The Euclidean norm of $r\in\mathbb{R}^{n}$
is $\left\Vert r\right\Vert \triangleq\sqrt{r^{\top}r}$. Given a
positive integer $N$ and collection $\{x_{i}\}_{i\in[N]}\subset\mathbb{R}^{n}$,
let $(x_{i})_{i\in[N]}\triangleq[x_{1}^{\top},x_{2}^{\top},...,x_{N}^{\top}]^{\top}\in\mathbb{R}^{nN}$.
Given $H\in\mathbb{R}^{m\times m}$ with columns $\{h_{i}\}_{i\in[n]}\subset\mathbb{R}^{m},\text{vec}(H)=[h_{1}^{\top},h_{2}^{\top},\ldots,h_{n}^{\top}]^{\top}\in\mathbb{R}^{mn}$.
The Frobenius norm is denoted by $\left\Vert \cdot\right\Vert _{F}\triangleq\left\Vert \mathrm{vec}(\cdot)\right\Vert $.
The Kronecker product of $A\in\mathbb{R}^{p\times q}$ and $B\in\mathbb{R}^{m\times n}$
is denoted by $A\otimes B\in\mathbb{R}^{pm\times qn}$. Given any
$A\in\mathbb{R}^{p\times a}$, $B\in\mathbb{R}^{a\times r}$, and
$C\in\mathbb{R}^{r\times s}$, the vectorization operator satisfies
the property $\mathrm{vec}(ABC)=(C^{\top}\otimes A)\mathrm{vec}\left(B\right)$.
Differentiating on both sides with respect to $\mathrm{vec}\left(B\right)$
yields the property
\begin{eqnarray}
\frac{\partial}{\partial\mathrm{vec}\left(B\right)}\mathrm{vec}(ABC) & = & (C^{\top}\otimes A).\label{eq:vec_diff_prop}
\end{eqnarray}

The maximum and minimum eigenvalues of $G=G^{\top}$ are denoted by
$\lambda_{\text{max}}(G)\in\mathbb{R}$ and $\lambda_{\text{min}}(G)\in\mathbb{R}$,
respectively. The spectral norm of $A\in\mathbb{R}^{n\times n}$ is
defined as $\sigma_{\max}\left(A\right)\triangleq\sqrt{\lambda_{\max}\left(A^{\top}A\right)}$.
For $A\in\mathbb{R}^{c\times c}$ and $B\in\mathbb{R}^{d\times d}$,
let the block diagonalization operator be defined as $\text{blkdiag}(A,B)=\left[\begin{array}{cc}
A & \mathbf{0}_{c\times d}\\
\mathbf{0}_{c\times d} & B
\end{array}\right]\in\mathbb{R}^{(c+d)\times(c+d)}$. The right-to-left matrix product operator is represented by $\stackrel{\curvearrowleft}{\prod}$,
i.e., $\stackrel{\curvearrowleft}{\stackrel[p=1]{m}{\prod}}A_{p}=A_{m}\ldots A_{2}A_{1}$
and $\stackrel{\curvearrowleft}{\stackrel[p=a]{m}{\prod}}A_{p}=I$
if $a>m$. 

The space of essentially bounded Lebesgue measurable functions is
denoted by $\mathcal{L}_{\infty}$. For $A\subseteq\mathbb{R}^{n}$
and $B\subseteq\mathbb{R}^{m}$, let ${\tt C}(A,B)$ denote the set
of continuous functions $f:A\to\mathbb{R}^{m}$ such that $f(A)\subseteq B$.
A function with $k$ continuous derivatives is called a $\mathtt{C}^{k}$
function.

\subsection{\label{subsec:Communication-Topology}Algebraic Graph Theory}

Let $\mathcal{G}\triangleq(\mathcal{V},E)$ represent a static and
undirected graph with number of nodes $N\in\mathbb{Z}_{\geq2}$, where
the node set is denoted by $\mathcal{V}\triangleq\left[N\right]$,
and the edge set is denoted by $E\subseteq\mathcal{V}\times\mathcal{V}$.
An edge between nodes $i$ and $k$ belongs to the edge set $(\text{i.e., }(i,k)\in E)$
if and only if node $i$ can send information to node $k$. Since
the graph $\mathcal{G}$ is undirected, $(i,k)\in E$ if and only
if $(k,i)\in E$. An undirected graph is connected whenever there
exists a sequence of edges in $E$ linking any two distinct nodes.
The neighborhood set of node $i$ is $\mathcal{N}_{i}\triangleq\{k\in\mathcal{V}\setminus\{i\}:(k,i)\in E\}$.
Let $\mathcal{A}\triangleq\left[a_{ik}\right]\in\mathbb{R}^{N\times N}$
be the adjacency matrix of $\mathcal{G}$, where $a_{ik}=1$ if $(k,i)\in E$
and $a_{ik}=0$ otherwise. Within this work, no self-loops are considered.
Therefore, $a_{ii}\triangleq0$ for all $i\in\mathcal{V}$. The degree
matrix of $\mathcal{G}$ is $\mathcal{D}\triangleq\text{diag}(\mathcal{A}\cdot1_{N})\in\mathbb{R}^{N\times N}$.
Using the degree and adjacency matrices, the Laplacian matrix of the
graph $\mathcal{G}$ is $\mathcal{L}\triangleq\mathcal{D}-\mathcal{A}$.

\subsection{Deep Neural Network Model}

Let $\kappa\in\mathbb{R}^{L_{0}}$ denote a DNN input, and $\theta\in\mathbb{R}^{p}$
denote the vector of DNN parameters (i.e., weights and bias terms).
A fully-connected feedforward DNN $\Phi(\kappa,\theta)$ with $k\in\mathbb{Z}_{>0}$
hidden layers and output size $L_{k+1}\in\mathbb{Z}_{>0}$ is defined
using a recursive relation $\varphi_{j}\in\mathbb{R}^{L_{j+1}}$ modeled
as \cite{Patil.Le.ea2022}
\begin{eqnarray}
\varphi_{j} & \triangleq & \begin{cases}
V_{j+1}^{\top}\kappa_{a}, & j=0,\\
V_{j+1}^{\top}\phi_{j}\left(\varphi_{j-1}\right) & j\in\left\{ 1,\ldots,k\right\} ,
\end{cases}\label{eq:DNN}
\end{eqnarray}
where $\Phi(\kappa,\theta)=\varphi_{k}$, $\kappa_{a}\triangleq\left[\kappa^{\top},1\right]^{\top}$
denotes the augmented input that accounts for the bias terms, $L_{j}\in\mathbb{Z}_{>0}$
denotes the a in the $j^{\textrm{th}}$ layer with $L_{j}^{a}\triangleq L_{j}+1$,
and $V_{j+1}\in\mathbb{R}^{L_{j}^{a}\times L_{j+1}}$ denotes the
matrix of weights and biases, for all $j\in\left\{ 0,\ldots,k\right\} $.

The vector of activation functions is denoted by $\phi_{j}:\mathbb{R}^{L_{j}}\to\mathbb{R}^{L_{j}^{a}}$
for all $j\in\left\{ 1,\ldots,k\right\} $. The vector of activation
functions can be composed of various activation functions, and hence,
may be represented as $\phi_{j}=\left[\varsigma_{1},\ldots,\varsigma_{L_{j}},1\right]^{\top}$
for all $j\in\left\{ 1,\ldots,k\right\} $, where $\varsigma_{j}:\mathbb{R}\to\mathbb{R}$
for all $j\in\left\{ 1,\ldots,L_{j}\right\} $ denotes a bounded $\mathtt{C}^{1}$
activation function, where 1 accounts for the bias term. For the DNN
architecture in (\ref{eq:DNN}), the vector of DNN weights is $\theta\triangleq\left[\mathrm{vec}(V_{1})^{\top},\ldots,\mathrm{vec}(V_{k})^{\top}\right]^{\top}$
with size $p=\sum_{j=0}^{k}L_{j}^{a}L_{j+1}$.

Consider $y_{j}\in\mathbb{R}^{L_{j}}$ where $y_{j}=\left[y_{1},\ldots,y_{L_{j}}\right]$
with $y_{i}\in\mathbb{R}$ for all $i\in\left\{ 1,\ldots,L_{j}\right\} $.
The Jacobian $\frac{\partial\phi_{j}}{\partial y_{j}}:\mathbb{R}^{L_{j}}\to\mathbb{R}^{L_{j}^{a}\times L_{j}}$
of the activation function vector at the $j^{\mathrm{th}}$ layer
is given by $\left[\varsigma_{1}^{\prime}(y_{1})\mathbf{e}_{1},\ldots,\varsigma_{L_{j}}^{\prime}(y_{L_{j}})\mathbf{e}_{L_{j}},\mathbf{0}_{L_{j}}\right]^{\top}$$\in\mathbb{R}^{L_{j}^{a}\times L_{j}}$,
where $\varsigma_{j}^{\prime}$ denotes the derivative of $\varsigma_{j}$
with respect to its argument for $j\in\left\{ 1,\ldots,L_{j}\right\} $,
$\mathbf{e}_{i}$ is the $i^{\text{th}}$ standard basis vector in
$\mathbb{R}^{L_{j}}$, and $\mathbf{0}_{L_{j}}$ is the zero vector
in $\mathbb{R}_{L_{j}}$.

Let the gradient of the DNN with respect to the weights be denoted
by $\nabla_{\theta}\Phi(\kappa,\theta)\triangleq\frac{\partial}{\partial\theta}\Phi(\kappa,\theta)$,
which can be represented as $\nabla_{\theta}\Phi(\kappa,\theta)=\left[\frac{\partial}{\partial{\rm vec}(V_{1})}\Phi(\kappa,\theta),\ldots,\frac{\partial}{\partial{\rm vec}(V_{k+1})}\Phi(\kappa,\theta)\right]$$\in\mathbb{R}^{L_{k+1}\times p}$,
where $\frac{\partial}{\partial{\rm vec}(V_{j})}\Phi(\kappa,\theta)$$\in\mathbb{R}^{L_{k+1}\times L_{j-1}^{a}L_{j}}$
for all $j\in\left\{ 1,\ldots,k+1\right\} $. Using (\ref{eq:DNN})
and the property of the vectorization operator in (\ref{eq:vec_diff_prop})
yields
\begin{align}
\nabla_{\theta}\Phi(\kappa,\theta) & =\left(\stackrel{\curvearrowleft}{\stackrel[\ell=j+1]{k}{\prod}}V_{\ell+1}^{\top}\frac{\partial\phi_{\ell}}{\partial\varphi_{\ell-1}}\right)\left(I_{L_{j+1}}\otimes\varrho_{j}\right),\label{eq:DNN Gradient}
\end{align}
for $j\in\left\{ 0,\ldots,k\right\} $, where $\varrho_{j}=\kappa_{a}^{\top}$
if $j=0$ and $\varrho_{j}=\phi_{j}^{\top}\left(\varphi_{j-1}\right)$
if $j\in\left\{ 1,\ldots,k\right\} $.\footnote{The following control and adaptation law development can be generalized
for any neural network architecture $\Phi$ with a corresponding Jacobian
$\nabla_{\theta}\Phi$. The reader is referred to \cite{Griffis.Patil.ea23_2}
and \cite{Patil.Le.ea.2022a} for extending the subsequent development
to LSTMs and ResNets, respectively.}

\section{Problem Formulation}

\subsection{System Dynamics}

Consider a servicer spacecraft indexed by $i$ and a defunct spacecraft
indexed by $0$. The relative motion of servicer spacecraft with respect
to the defunct spacecraft, which is assumed to follow an elliptical
Keplerian orbit, can be described by the following system of differential
equations (see \cite{Schaub.Junkins2003})
\begin{align*}
\ddot{q}_{i}^{x} & =2\tau\dot{q}_{i}^{y}+\dot{\tau}q_{i}^{y}+\tau^{2}q_{i}^{x}+\frac{\mu}{{\tt r}^{2}}-\frac{\mu({\tt r}+q_{i}^{x})}{\left(\left({\tt r}+q_{i}^{x}\right)^{2}+(q_{i}^{y})^{2}+(q_{i}^{z})^{2}\right)^{\frac{3}{2}}}+\mathbf{P}_{i}^{x}(q_{i}^{x})\\
 & +\mathbf{D}_{i}^{x}(\dot{q}_{i}^{x})+u_{i}^{x},\\
\ddot{q}_{i}^{y} & =-2\tau\dot{q}_{i}^{x}-\dot{\tau}q_{i}^{x}+\tau^{2}q_{i}^{y}-\frac{\mu q_{i}^{y}}{\left(\left({\tt r}+q_{i}^{x}\right)^{2}+(q_{i}^{y})^{2}+(q_{i}^{z})^{2}\right)^{\frac{3}{2}}}+\mathbf{P}_{i}^{y}(q_{i}^{y})+\mathbf{D}_{i}^{y}(\dot{q}_{i}^{y})+u_{i}^{y},\\
\ddot{q}_{i}^{z} & =-\frac{\mu q_{i}^{z}}{\left(\left({\tt r}+q_{i}^{x}\right)^{2}+(q_{i}^{y})^{2}+(q_{i}^{z})^{2}\right)^{\frac{3}{2}}}+\mathbf{P}_{i}^{z}(q_{i}^{z})+\mathbf{D}_{i}^{z}(\dot{q}_{i}^{z})+u_{i}^{z},\\
\ddot{{\tt r}} & ={\tt r}\tau^{2}-\frac{\mu}{{\tt r}^{2}},\\
\dot{\tau} & =-\frac{2\dot{{\tt r}}\tau}{{\tt r}},
\end{align*}
where, $q_{i}^{x},q_{i}^{y},q_{i}^{z}\in\mathbb{R}$ represent the
radial, in-track, and cross-track rectangular coordinates of the servicer
spacecraft relative to the defunct spacecraft, $u_{i}^{x},u_{i}^{y},u_{i}^{z}\in\mathbb{R}$
denote the control accelerations of the servicer spacecraft, ${\tt r},\tau\in\mathbb{R}$
are the orbital radius and angular velocity of the defunct spacecraft,
respectively, and $\mu$ is the gravitational parameter of the central
body. The additional forces influencing the servicer's motion include
gravitational perturbations $\mathbf{P}_{i}^{x}(q_{i}^{x}),\mathbf{P}_{i}^{y}(q_{i}^{y}),\mathbf{P}_{i}^{z}(q_{i}^{z})\in\mathbb{R}$,
which model deviations from an ideal central gravitational field,
such as oblateness ($J_{2}$) or third-body effects (see \cite[Equation 3]{Vess.Starin2003}),
and atmospheric drag forces $\mathbf{D}_{i}^{x}(\dot{q}_{i}^{x}),\mathbf{D}_{i}^{y}(\dot{q}_{i}^{y}),\mathbf{D}_{i}^{z}(\dot{q}_{i}^{z})\in\mathbb{R}$
(see \cite[Equation 1]{MostazaPrieto.Graziano.ea2014}).

For spacecraft using a radar system for rendezvous navigation, the
transformation of variables
\begin{align*}
q_{i}^{x} & =\sigma_{i}\cos\phi_{i}\cos\gamma_{i},\\
q_{i}^{y} & =\sigma_{i}\cos\phi_{i}\sin\gamma_{i},\\
q_{i}^{z} & =\sigma_{i}\sin\phi_{i},
\end{align*}
are used (see \cite{Eggleston1961}), where $\sigma_{i}$ is the range
between the servicer spacecraft $i$ and the defunct spacecraft, $\gamma_{i}$
is the azimuth angle, and $\phi_{i}$ is the relative elevation angle.
After the substitution of this transformation, the relative motion
dynamics are given by
\begin{equation}
\ddot{q}_{i}=f_{i}(q_{i},\dot{q}_{i})+g(q_{i})u_{i}+\omega(\tau,\dot{\tau},q_{i},\dot{q}_{i})\label{eq:Dynamics}
\end{equation}
where $q_{i}=\left[\begin{array}{ccc}
\sigma_{i} & \gamma_{i} & \phi_{i}\end{array}\right]^{\top}\in\mathbb{R}^{3}$, $u_{i}=\left[\begin{array}{ccc}
u_{i}^{x} & u_{i}^{y} & u_{i}^{z}\end{array}\right]^{\top}\in\mathbb{R}^{3}$, $f_{i}=\left[\begin{array}{ccc}
f^{\sigma} & f^{\gamma} & f^{\phi}\end{array}\right]\in\mathbb{R}^{3}$, $\omega=\left[\begin{array}{ccc}
\omega^{\sigma} & \omega^{\gamma} & \omega^{\phi}\end{array}\right]\in\mathbb{R}^{3}$ where
\begin{align}
f_{i}^{\sigma}(q_{i},\dot{q}_{i}) & =\sigma_{i}\dot{\phi}^{2}-\frac{\mu({\tt r}\cos\gamma_{i}\cos\phi_{i}+\sigma_{i})}{\left({\tt r}^{2}+\sigma_{i}^{2}+2{\tt r}\sigma_{i}\cos\gamma_{i}\cos\phi_{i}\right)^{\frac{3}{2}}}+\frac{\mu}{{\tt r}^{2}}\cos\gamma_{i}\cos\phi_{i}\nonumber \\
 & +\mathbf{P}_{i}^{x}(q_{i}^{x})+\mathbf{D}_{i}^{x}(\dot{q}_{i}^{x}),\nonumber \\
f_{i}^{\gamma}(q_{i},\dot{q}_{i}) & =\frac{\mu({\tt r}\sin\gamma_{i}\sec\phi_{i})}{\sigma_{i}\left({\tt r}^{2}+\sigma_{i}^{2}+2{\tt r}\sigma_{i}\cos\gamma_{i}\cos\phi_{i}\right)^{\frac{3}{2}}}+\frac{\mu}{{\tt r}^{2}\sigma_{i}}\sin\gamma_{i}\sec\phi_{i}+\mathbf{P}_{i}^{y}(q_{i}^{y})+\mathbf{D}_{i}^{y}(\dot{q}_{i}^{y}),\nonumber \\
f_{i}^{\sigma}(q_{i},\dot{q}_{i}) & =-\frac{2\dot{\sigma}_{i}\dot{\phi}_{i}}{\sigma_{i}}-\frac{\mu}{{\tt r}^{2}\sigma_{i}}\cos\gamma_{i}\sin\phi_{i}+\frac{\mu{\tt r}\cos\gamma_{i}\sin\phi_{i}}{\sigma_{i}\left({\tt r}^{2}+\sigma_{i}^{2}+2{\tt r}\sigma_{i}\cos\gamma_{i}\cos\phi_{i}\right)^{\frac{3}{2}}}\nonumber \\
 & +\mathbf{P}_{i}^{z}(q_{i}^{z})+\mathbf{D}_{i}^{z}(\dot{q}_{i}^{z}),\label{eq:drift}
\end{align}
\begin{equation}
g(q_{i})=\left[\begin{array}{ccc}
1 & 0 & 0\\
0 & \frac{1}{\sigma_{i}} & 0\\
0 & 0 & \frac{1}{\sigma_{i}}
\end{array}\right],\label{eq:controlEffec}
\end{equation}
and 
\begin{align}
\omega^{\sigma}(\tau,q_{i},\dot{q}_{i}) & =(\tau^{2}+2\tau\dot{\gamma}+\dot{\gamma}^{2})\sigma\cos^{2}\phi,\nonumber \\
\omega^{\gamma}(\tau,\dot{\tau},q_{i},\dot{q}_{i}) & =2(\tau+\dot{\gamma})\dot{\phi}\tan\phi-\dot{\tau}-2(\tau+\dot{\gamma})\frac{\dot{\sigma}}{\sigma},\nonumber \\
\omega^{\sigma}(\tau,\dot{q}_{i}) & =-\frac{1}{2}(\tau+\dot{\gamma})^{2}.\label{eq:Disturbances}
\end{align}

\subsection{Multi-Spacecraft System Model}

Consider a multi-spacecraft system of servicing spacecraft consisting
of $N$ agents indexed by $i\in\mathcal{V}$, and a single defunct
spacecraft indexed by $\{0\}$, with $\overline{\mathcal{V}}\triangleq\mathcal{V}\cup\{0\}$.
The dynamics for spacecraft $i\in\mathcal{V}$ are given by (\ref{eq:Dynamics}).
Based on the structure of $\omega$ as defined by (\ref{eq:Disturbances}),
there exist known constants $\overline{\omega},\overline{\dot{\omega}}\in\mathbb{R}_{>0}$
such that $\left\Vert \omega(\tau,\dot{\tau},q_{i},\dot{q}_{i})\right\Vert \leq\overline{\omega}$
and $\left\Vert \dot{\omega}(\tau,\dot{\tau},q_{i},\dot{q}_{i})\right\Vert \leq\overline{\dot{\omega}}$
for all $t\in\left[0,\infty\right)$.

The dynamics for the defunct spacecraft is given by
\begin{equation}
\ddot{q}_{0}=f_{0}(q_{0},\dot{q}_{0}),\label{eq:Target Dynamics}
\end{equation}
where $q_{0},\dot{q}_{0},\ddot{q}_{0}\in\mathbb{R}^{3}$ denote the
defunct spacecrafts unknown generalized position, velocity, and acceleration,
respectively, and the function $f_{0}:\mathbb{R}^{3}\times\mathbb{R}^{3}\to\mathbb{R}^{3}$
is unknown and of class $\mathtt{C}^{1}$. In practice, the target
spacecraft's motion is often constrained by its initial conditions,
orbital mechanics, or other environmental factors, which limits its
state variations. The following assumption formally captures these
conditions. 
\begin{assumption}
\label{targetbounds}There exist known constants $\overline{q}_{0},\overline{\dot{q}}_{0}\in\mathbb{R}_{>0}$
such that $\left\Vert q_{0}(t)\right\Vert \leq\overline{q}_{0}$ and
$\left\Vert \dot{q}_{0}(t)\right\Vert \leq\overline{\dot{q}}_{0}$
for all $t\in\left[0,\infty\right)$.
\end{assumption}

\subsection{Control Objective}

Each spacecraft \textit{$i\in\mathcal{V}$} can measure the relative
position $d_{i,j}\in\mathbb{R}^{3}$ between itself and its neighbors
$j\in\mathcal{N}_{i}$, defined as
\begin{equation}
d_{i,j}\triangleq q_{j}-q_{i}.\label{eq:relative position}
\end{equation}
Unique to the defunct spacecraft, each spacecraft $i\in\mathcal{V}$
can measure the partial relative position between itself and the defunct
spacecraft, given by
\begin{equation}
y_{i}\triangleq C_{i}\left(q_{0}-q_{i}\right),\label{eq:measurement model}
\end{equation}
where $y_{i}\in\mathbb{R}^{m_{i}}$ and $C_{i}\in\mathbb{R}^{m_{i}\times3}$
represents the output matrix of agent $i$, characterizing the agent\textquoteright s
heterogeneous sensing capabilities. Since each spacecraft is typically
equipped with a specific suite of sensors, such as cameras or LIDAR,
it is reasonable to assume that each spacecraft has knowledge of its
own sensor configuration and capabilities. Formally, each agent $i\in\mathcal{V}$
knows its own output matrix $C_{i}$.

The primary objective is to design a distributed controller for each
servicer spacecraft $i\in\mathcal{V}$, that guides the servicer spacecraft
towards the defunct spacecraft using only the partial relative measurement
model. Since relative velocity measurements are unavailable, a secondary
objective is to develop a decentralized observer that can estimate
the relative velocities using only locally available information from
each spacecraft. Additionally, because the defunct spacecrafts state
is unknown, a tertiary objective is to design a distributed system
identifier that reconstructs the defunct spacecrafts unknown state
while simultaneously using online learning techniques to approximate
its dynamics.

To quantify the servicing objective, define the tracking error $e_{i}\in\mathbb{R}^{3}$
of agent $i\in\mathcal{V}$ as
\begin{equation}
e_{i}\triangleq q_{0}-q_{i}.\label{eq:tracking error}
\end{equation}
Furthermore, define the relative position error $\eta_{i}\in\mathbb{R}^{3}$
as
\begin{equation}
\eta_{i}\triangleq\sum_{j\in\mathcal{N}_{i}}d_{i,j}+b_{i}C_{i}^{\top}y_{i},\label{eq: Implementable Position}
\end{equation}
where $b_{i}\in\left\{ 0,1\right\} $ denotes a binary indicator of
spacecraft $i$'s ability to sense the defunct spacecraft, for all
$i\in\mathcal{V}$. 

Using (\ref{eq:relative position}) and (\ref{eq:tracking error}),
(\ref{eq: Implementable Position}) is expressed in an equivalent
analytical form as
\begin{equation}
\eta_{i}=\left(b_{i}C_{i}^{\top}C_{i}e_{i}-\sum_{j\in\mathcal{N}_{i}}\left(e_{j}-e_{i}\right)\right),\label{eq:relative position sychronization}
\end{equation}
for all $i\in\mathcal{V}$.

In spacecraft servicing scenarios, reliable communication among spacecraft
enables efficient coordination and execution of tasks. The communication
topology of the network is often designed to be connected, ensuring
that information can be exchanged between all spacecraft. This connectivity
enables the spacecraft to share resources and adapt to changing mission
requirements. The subsequent assumption provides a mathematical representation
of these conditions.
\begin{assumption}
\label{thm:connected} The graph $\mathcal{G}$ is connected, and
there exists at least one $b_{i}=1$ for some $i\in\mathcal{V}$.
\end{assumption}

\section{Control Design}

Define the filtered tracking error $r_{i}\in\mathbb{R}^{3}$ as
\begin{equation}
r_{i}=\dot{e}_{i}+k_{1}e_{i},\label{eq:Filtered Tracking Error}
\end{equation}
where $k_{1}\in\mathbb{R}_{>0}$ is a user-defined constant, for all
$i\in\mathcal{V}$. Let $\hat{\eta}_{i}\in\mathbb{R}^{3}$ and $\hat{\zeta}_{i}\in\mathbb{R}^{3}$
denote the relative position error and relative velocity error estimates,
respectively. The corresponding relative position estimation error
$\tilde{\eta}_{i}\in\mathbb{R}^{3}$ and relative velocity estimation
error $\tilde{\zeta}_{i}\in\mathbb{R}^{3}$ are defined as
\begin{align}
\tilde{\eta}_{i} & =\eta_{i}-\hat{\eta}_{i},\label{eq:position estimation error}\\
\tilde{\zeta}_{i} & =\zeta_{i}-\hat{\zeta}_{i},\label{eq:velocity estimation error}
\end{align}
where $\zeta_{i}\triangleq\dot{\eta}_{i}$, for all $i\in\mathcal{V}$.
Taking the second time-derivative of (\ref{eq:tracking error}), substituting
(\ref{eq:Dynamics}) and (\ref{eq:Target Dynamics}) into the resulting
expression, and adding and subtracting by $f_{0}(\eta_{i},\hat{\zeta}_{i})-f_{i}(\eta_{i},\hat{\zeta}_{i})$
yields
\begin{align}
\ddot{e}_{i} & =f_{0}(\eta_{i},\hat{\zeta}_{i})-f_{i}(\eta_{i},\hat{\zeta}_{i})-g(q_{i})u_{i}(t)-\omega(\tau,\dot{\tau},q_{i},\dot{q}_{i})+\tilde{f}_{i}(q_{0},\dot{q}_{0},q_{i},\dot{q}_{i},\eta_{i},\hat{\zeta}_{i}),\label{eq:tracking error acceleration}
\end{align}
where $\tilde{f}_{i}(q_{0},\dot{q}_{0},q_{i},\dot{q}_{i},\eta_{i},\hat{\zeta}_{i})\triangleq f_{0}(q_{0},\dot{q}_{0})-f_{0}(\eta_{i},\hat{\zeta}_{i})+f_{i}(\eta_{i},\hat{\zeta}_{i})-f_{i}(q_{i},\dot{q}_{i})\in\mathbb{R}^{3}$.
Substituting (\ref{eq:tracking error acceleration}) into the time
derivative of (\ref{eq:Filtered Tracking Error}) yields
\begin{align}
\dot{r}_{i} & =f_{0}(\eta_{i},\hat{\zeta}_{i})-f_{i}(\eta_{i},\hat{\zeta}_{i})-g(q_{i})u_{i}(t)-\omega(\tau,\dot{\tau},q_{i},\dot{q}_{i})+\tilde{f}_{i}(q_{0},\dot{q}_{0},q_{i},\dot{q}_{i},\eta_{i},\hat{\zeta}_{i})+k_{1}\dot{e}_{i}.\label{eq:Filtered Tracking Error Derivative-1}
\end{align}

\subsection{Lyapunov-based Deep Neural Network Function Approximation}

Traditional physics-based models can be challenging to formulate in
a manner that accurately captures the complex dynamics of spacecraft
in servicing scenarios, due to simplifying assumptions and uncertainties
such as varying mass properties and environmental conditions. DNNs
offer a promising alternative that can learn complex patterns from
data without requiring explicit knowledge of the underlying physics.
This function approximation capability motivates the use of DNN-based
function approximation to develop more accurate and robust models
of spacecraft dynamics. The universal function approximation property
is assumed to hold over the compact set $\Omega\subset\mathbb{R}^{6}$,
defined as
\begin{equation}
\Omega\triangleq\left\{ \iota\in\mathbb{R}^{6}:\left\Vert \iota\right\Vert \leq\Upsilon\right\} ,\label{eq:Omega}
\end{equation}
where $\Upsilon\in\mathbb{R}_{>0}$ is a positive constant, for all
$k\in\overline{\mathcal{V}}$.\footnote{The domain $\mathbb{R}^{3}\times\mathbb{R}^{3}$ is identified with
$\mathbb{R}^{6}$ by treating $\left[\begin{array}{cc}
\mathtt{x} & \mathtt{y}\end{array}\right]\in\mathbb{R}^{3}\times\mathbb{R}^{3}$ as $\iota=\left[\begin{array}{cccccc}
\mathtt{x}_{1} & \cdots & \mathtt{x}_{3} & \mathtt{y}_{1} & \cdots & \mathtt{y}_{3}\end{array}\right]\in\mathbb{R}^{6}$.}

Let $\kappa_{i}\triangleq\left[\begin{array}{cc}
\eta_{i}^{\top} & \hat{\zeta}_{i}^{\top}\end{array}\right]^{\top}\in\mathbb{R}^{6}$ and define $h_{i}:\mathbb{R}^{6}\to\mathbb{R}^{3}$ as $h_{i}(\kappa_{i})\triangleq f_{0}(\eta_{i},\hat{\zeta}_{i})-f_{i}(\eta_{i},\hat{\zeta}_{i})$,
for all $i\in\mathcal{V}$. Prescribe $\overline{\varepsilon}>0$
and note that for all $i\in\mathcal{V}$, $h_{i}\in{\tt C}\left(\Omega,\mathbb{R}^{3}\right)$.
Then, by \cite[Theorem 3.2]{Kidger2020}, there exists an Lb-DNN such
that $\sup_{\kappa_{i}\in\Omega}\left\Vert \Phi_{i}(\kappa_{i},\theta_{i}^{\ast})-h_{i}(\kappa_{i})\right\Vert <\overline{\varepsilon}$,
for all $i\in\mathcal{V}$. Therefore, each agent $i\in\mathcal{V}$
can model the unknown function $h_{i}(\kappa_{i})$ using an Lb-DNN
as

\begin{equation}
h_{i}(\kappa_{i})=\Phi_{i}(\kappa_{i},\theta_{i}^{\ast})+\varepsilon_{i}(\kappa_{i}),\label{eq:LbDNN}
\end{equation}
where $\theta_{i}^{\ast}\in\mathbb{R}^{p}$ are the ideal weights,
$\Phi_{i}:\mathbb{R}^{6}\times\mathbb{R}^{p}\to\mathbb{R}^{3}$, and
$\varepsilon_{i}:\mathbb{R}^{6}\to\mathbb{R}^{3}$ is an unknown function
representing the reconstruction error that is bounded as $\sup_{\kappa_{i}\in\Omega}\left\Vert \varepsilon_{i}(\kappa_{i})\right\Vert <\overline{\varepsilon}$.

The Lb-DNN described in (\ref{eq:LbDNN}) is inherently nonlinear
with respect to its weights. To address the nonlinearity, a first-order
Taylor approximation for the Lb-DNN is applied. To quantify the approximation,
the parameter estimation error $\widetilde{\theta}_{i}\in\mathbb{R}^{p}$
is defined as
\begin{equation}
\widetilde{\theta}_{i}=\theta_{i}^{\ast}-\widehat{\theta}_{i},\label{eq: parameter estimation error}
\end{equation}
for all $i\in\mathcal{V}$, where $\widehat{\theta}_{i}\in\mathbb{R}^{p}$
represents the weight estimates. The first-order Taylor approximation
of $\Phi_{i}(\kappa_{i},\theta_{i}^{\ast})$ evaluated at $\hat{\theta}_{i}$
is given as
\begin{equation}
\Phi_{i}(\kappa_{i},\theta_{i}^{\ast})=\Phi_{i}(\kappa_{i},\hat{\theta}_{i})+\nabla_{\hat{\theta}_{i}}\Phi_{i}(\kappa_{i},\hat{\theta}_{i})\widetilde{\theta}_{i}+R_{1,i}(\widetilde{\theta}_{i},\kappa_{i}),\label{eq:TaylorApproximation}
\end{equation}
where $\nabla_{\widehat{\theta}_{i}}\Phi_{i}:\mathbb{R}^{6}\times\mathbb{R}^{p}\to\mathbb{R}^{3\times p}$
is the Jacobian of $\Phi_{i}$ with respect to $\hat{\theta}_{i}$,
and $R_{1,i}:\mathbb{R}^{p}\times\mathbb{R}^{6}\to\mathbb{R}^{3}$
is the first Lagrange remainder, which accounts for the error introduced
by truncating the Taylor approximation after the first-order term,
for all $i\in\mathcal{V}$.

Using (\ref{eq:LbDNN}) and (\ref{eq:TaylorApproximation}), (\ref{eq:Filtered Tracking Error Derivative-1})
is rewritten as
\begin{align}
\dot{r}_{i} & =\Phi_{i}(\kappa_{i},\hat{\theta}_{i})+\nabla_{\hat{\theta}_{i}}\Phi_{i}(\kappa_{i},\hat{\theta}_{i})\widetilde{\theta}_{i}-g(q_{i})u_{i}(t)-\omega(\tau,\dot{\tau},q_{i},\dot{q}_{i})+k_{1}\dot{e}_{i}\nonumber \\
 & +\tilde{f}_{i}(q_{0},\dot{q}_{0},q_{i},\dot{q}_{i},\kappa_{i})+\Delta_{i}(\widetilde{\theta}_{i},\kappa_{i}),\label{eq:Filtered Tracking Error Derivative-2}
\end{align}
where $\Delta_{i}(\widetilde{\theta}_{i},\kappa_{i})\triangleq R_{1,i}(\widetilde{\theta}_{i},\kappa_{i})+\varepsilon_{i}(\kappa_{i})\in\mathbb{R}^{3}$.
By \cite[Theorem 4.7]{Lax2017}, the remainder term $R_{1,i}(\widetilde{\theta}_{i},\kappa_{i})$
can be expressed as $R_{1,i}(\widetilde{\theta}_{i},\kappa_{i})=\frac{1}{2}\tilde{\theta}_{i}^{\top}\nabla_{\hat{\theta}_{i}}^{2}\Phi_{i}\left(\kappa_{i},\hat{\theta}_{i}+\gamma_{i}\tilde{\theta}_{i}\right)\tilde{\theta}_{i}$,
where $\nabla_{\widehat{\theta}_{i}}^{2}\Phi_{i}:\mathbb{R}^{6}\times\mathbb{R}^{p}\to\mathbb{R}^{6\times p\times p}$
is the Hessian of $\Phi_{i}$ with respect to $\hat{\theta}_{i}$,
and $\gamma_{i}\in\left[0,1\right]$, for all $i\in\mathcal{V}$.
Consequently, there exists some constant $M_{i}\in\mathbb{R}_{>0}$
such that $\sigma_{\max}\left(\nabla_{\widehat{\theta}_{i}}^{2}\Phi_{i}\left(\kappa_{i},\hat{\theta}_{i}+\gamma_{i}\tilde{\theta}_{i}\right)\right)\leq M_{i}$,
which, by \cite[Theorem 8.8]{Zhang2011}, yields $\left\Vert R_{1,i}(\widetilde{\theta}_{i},\kappa_{i})\right\Vert \leq\frac{M_{i}}{2}\left\Vert \tilde{\theta}_{i}\right\Vert ^{2}$,
given bounded $\kappa_{i}$, for all $i\in\mathcal{V}$.

To facilitate the subsequent development, the following assumption
is made.
\begin{assumption}
\label{Asp: DNN Bounds}\cite[Assumption 1]{Lewis1996b} There exists
$\overline{\theta}\in\mathbb{R}_{>0}$ such that the unknown ideal
weights can be bounded as $\max_{i\in\mathcal{V}}\left\{ \left\Vert \theta_{i}^{\ast}\right\Vert \right\} \leq\overline{\theta}$.
\end{assumption}
Assumption \ref{Asp: DNN Bounds} is reasonable since, in practice,
the user can select $\overline{\theta}$ a priori, and subsequently
prescribe $\overline{\varepsilon}$ using a conservative estimate
whose feasibility can be verified using heuristic search methods,
e.g., Monte Carlo search. Alternatively, adaptive bound estimation
techniques (e.g., \cite{Fan2018}) can be used to estimate $\overline{\theta}$.

\subsection{Distributed Observer-based Control Design}

The controller, filter, observer, and adaptation law are designed
to satisfy a set of conditions that ensure the stability of the closed-loop
system. Specifically, the design aims to cancel out cross-coupled
terms in the Lyapunov function derivative, while bounding or utilizing
the remaining terms to achieve negative definiteness. To this end,
the control design incorporates several key features, including the
introduction of auxiliary variables, such as a filtered estimation
error, and the selection of user-defined constants. The subsequent
stability analysis will demonstrate that these design elements establish
the boundedness and convergence properties of the system and will
provide a rigorous justification for the specific structure of the
control input, observer, and adaptation law. 

To facilitate the distributed observer design, a filtered estimation
error $\tilde{r}_{i}\in\mathbb{R}^{3}$ is defined as
\begin{equation}
\tilde{r}_{i}\triangleq\dot{\tilde{\eta}}_{i}+k_{3}\tilde{\eta}_{i}+\rho_{i},\label{eq:Filtered Estimation Error}
\end{equation}
for all $i\in\mathcal{V}$, where $k_{3}\in\mathbb{R}_{>0}$ is a
user-defined constant, and $\rho_{i}\in\mathbb{R}^{3}$ is designed
as
\begin{align}
\rho_{i}(t) & =-(k_{3}+k_{4})\tilde{\eta}_{i}(t)\nonumber \\
 & +\left(1-k_{3}^{2}-k_{3}k_{4}\right)\int_{t_{0}}^{t}\tilde{\eta}_{i}(\tau){\rm d}\tau-\left(k_{3}+k_{4}+k_{5}\right)\int_{t_{0}}^{t}\rho_{i}(\tau){\rm d}\tau,\nonumber \\
\rho_{i}(0) & =\mathbf{0}_{3},\label{eq:rho implemented}
\end{align}
where $k_{4},k_{5}\in\mathbb{R}_{>0}$ are user-defined constants.
The distributed observer is designed as
\begin{align}
\dot{\hat{\eta}}_{i} & =\hat{\zeta}_{i},\nonumber \\
\dot{\hat{\zeta}}_{i} & =\left(\sum_{j\in\mathcal{N}_{i}}\left(g_{j}u_{j}-g_{i}u_{i}\right)-b_{i}C_{i}^{\top}C_{i}g_{i}u_{i}\right)-(k_{3}^{2}-2)\tilde{\eta}_{i}-\left(2k_{3}+k_{4}+k_{5}\right)\rho_{i},\nonumber \\
\hat{\eta}_{i}(0) & =\mathbf{0}_{3},\nonumber \\
\hat{\zeta}_{i}(0) & =\mathbf{0}_{3},\label{eq: Distributed Observer}
\end{align}
for all $i\in\mathcal{V}$. Based on the subsequent stability analysis,
the control input is designed as
\begin{align}
u_{i} & =g_{i}^{-1}\left(\Phi_{i}(\kappa_{i},\hat{\theta}_{i})+k_{2}\left(k_{1}\eta_{i}+\hat{\zeta}_{i}-k_{3}\tilde{\eta}_{i}-\rho_{i}\right)\right),\label{eq: controller-1}
\end{align}
where $k_{2}\in\mathbb{R}_{>0}$ is a user-defined constant and $g_{i}^{-1}$
is guaranteed to exist by (\ref{eq:controlEffec}), for all $i\in\mathcal{V}$.
Similarly, the adaptation law for the Lb-DNN is designed as
\begin{align}
\dot{\hat{\theta}}_{i} & =\text{proj}\left(\Theta_{i},\hat{\theta}_{i},\overline{\theta}\right),\label{eq:adaptive update law}
\end{align}
where
\begin{align*}
\Theta_{i} & \triangleq\Gamma_{i}\left(\nabla_{\hat{\theta}_{i}}\Phi_{i}(\kappa_{i},\hat{\theta}_{i})\left(\hat{\zeta}_{i}+k_{1}\eta_{i}\right)-k_{6}\left(\hat{\theta}_{i}-\sum_{j\in\mathcal{N}_{i}}\left(\hat{\theta}_{j}-\hat{\theta}_{i}\right)\right)\right),
\end{align*}
for all $i\in\mathcal{V}$, where $k_{6}\in\mathbb{R}_{>0}$ is a
user-defined forgetting rate, $\Gamma_{i}\in\mathbb{R}^{p\times p}$
is a symmetric user-defined learning rate with strictly positive eigenvalues,
and proj$(\cdot)$ denotes a smooth projection operator as defined
in \cite[Appendix E]{Krstic1995}, which ensures $\hat{\theta}(t)\in\mathcal{B}_{\overline{\theta}}\triangleq\left\{ \theta\in\mathbb{R}^{p}:\left\Vert \theta\right\Vert \leq\overline{\theta}\right\} $
for all $t\in\mathbb{R}_{\geq0}$.

\subsection{Ensemble Analysis}

To aid in the stability analysis, the interaction matrix $\mathcal{H}\in\mathbb{R}^{3N\times3N}$
is defined as
\begin{equation}
\mathcal{H}\triangleq\left(L\otimes I_{3}\right)+\mathcal{C},\label{eq:Augmented Interaction Matrix}
\end{equation}
where $\mathcal{C}\triangleq\text{blkdiag}\left(b_{1}C_{1}^{\top}C_{1},\ldots,b_{N}C_{N}^{\top}C_{N}\right)\in\mathbb{R}^{3N\times3N}$.
Using (\ref{eq:Augmented Interaction Matrix}), (\ref{eq:relative position sychronization})
is expressed in an ensemble form as
\begin{equation}
\eta=\mathcal{H}e,\label{eq:ensembleEta}
\end{equation}
where $\eta\triangleq\left(\eta_{i}\right)_{i\in\mathcal{V}}\in\mathbb{R}^{3N}$
and $e\triangleq\left(e_{i}\right)_{i\in\mathcal{V}}\in\mathbb{R}^{3N}$.
Using (\ref{eq:Filtered Tracking Error}) and (\ref{eq:ensembleEta})
yields the useful expression
\begin{equation}
\mathcal{H}r=\dot{\eta}+k_{1}\eta,\label{eq:Hr}
\end{equation}
where $r\triangleq\left(r_{i}\right)_{i\in\mathcal{V}}\in\mathbb{R}^{3N}$.
Using (\ref{eq:velocity estimation error}), (\ref{eq:Filtered Estimation Error}),
(\ref{eq: Distributed Observer}), (\ref{eq:ensembleEta}) and (\ref{eq:Hr}),
(\ref{eq: controller-1}) is expressed in an ensemble form as
\begin{align}
u & =\mathbf{g}^{-1}\left(\Phi+k_{2}\mathcal{H}r-k_{2}\tilde{r}\right),\label{eq:ensemble controller}
\end{align}
where $u\triangleq\left(u_{i}\right)_{i\in\mathcal{V}}\in\mathbb{R}^{3N}$,
$\mathbf{g}^{-1}\triangleq I_{N}\otimes g^{-1}\in\mathbb{R}^{3N\times3N}$,
$\Phi\triangleq\left(\Phi_{i}\right)_{i\in\mathcal{V}}\in\mathbb{R}^{3N}$,
$\tilde{r}\triangleq\left(\tilde{r}_{i}\right)_{i\in\mathcal{V}}\in\mathbb{R}^{3N}$,
and $\rho\triangleq\left(\rho_{i}\right)_{i\in\mathcal{V}}\in\mathbb{R}^{3N}$.
Using (\ref{eq:Augmented Interaction Matrix}), (\ref{eq: Distributed Observer})
is expressed in an ensemble form as
\begin{align}
\dot{\hat{\eta}} & =\hat{\zeta},\nonumber \\
\dot{\hat{\zeta}} & =-\mathcal{H}\mathbf{g}u-(k_{3}^{2}-2)\tilde{\eta}-\left(2k_{3}+k_{4}+k_{5}\right)\rho,\label{eq:observer}
\end{align}
where $\hat{\eta}\triangleq\left(\hat{\eta}_{i}\right)_{i\in\mathcal{V}}\in\mathbb{R}^{3N}$,
$\hat{\zeta}\triangleq\left(\hat{\zeta}_{i}\right)_{i\in\mathcal{V}}\in\mathbb{R}^{3N}$,
$\mathbf{g}\triangleq I_{N}\otimes g\in\mathbb{R}^{3N\times3N}$,
and $\tilde{\eta}\triangleq\left(\tilde{\eta}_{i}\right)_{i\in\mathcal{V}}\in\mathbb{R}^{3N}$.
Furthermore, the time-derivative of (\ref{eq:rho implemented}) is
expressed in an ensemble form as
\begin{equation}
\dot{\rho}\triangleq\tilde{\eta}-(k_{3}+k_{4})\tilde{r}-k_{5}\rho.\label{eq:rho}
\end{equation}
Substituting (\ref{eq:ensemble controller}) into the ensemble representation
of (\ref{eq:Filtered Tracking Error Derivative-2}) yields
\begin{align}
\dot{r} & =\nabla_{\hat{\theta}}\Phi\widetilde{\theta}-k_{2}\mathcal{H}r+k_{2}\tilde{r}+k_{1}r-k_{1}^{2}e-\mathbf{w}+\tilde{f}+\Delta,\label{eq:rDot}
\end{align}
where $\nabla_{\hat{\theta}}\Phi\triangleq\text{blkdiag}\left(\nabla_{\hat{\theta}_{1}}\Phi_{1},\ldots,\nabla_{\hat{\theta}_{N}}\Phi_{N}\right)\in\mathbb{R}^{3N\times3N}$,
$\widetilde{\theta}\triangleq\left(\tilde{\theta}_{i}\right)_{i\in\mathcal{V}}\in\mathbb{R}^{3N}$,
$\mathbf{w}\triangleq\mathbf{1}_{N}\otimes\omega\in\mathbb{R}^{3N}$,
$\tilde{f}\triangleq\left(\tilde{f}_{i}\right)_{i\in\mathcal{V}}\in\mathbb{R}^{3N}$,
and $\Delta\triangleq\left(\Delta_{i}\right)_{i\in\mathcal{V}}\in\mathbb{R}^{3N}$.
Taking the time-derivative of (\ref{eq:Filtered Estimation Error}),
using (\ref{eq:tracking error acceleration}), (\ref{eq:LbDNN}),
(\ref{eq:TaylorApproximation}), (\ref{eq:Filtered Estimation Error}),
and (\ref{eq:ensembleEta}), and then substituting (\ref{eq:observer})
and (\ref{eq:rho}) into the ensemble representation of the resulting
expression yields
\begin{align}
\dot{\tilde{r}} & =\mathcal{H}\left(h-\mathbf{w}+\tilde{f}\right)+\left(k_{3}+k_{4}\right)\rho-\tilde{\eta}-k_{4}\tilde{r},\label{eq:rTildeDot}
\end{align}
where $h\triangleq\left(h_{i}(\kappa_{i})\right)_{i\in\mathcal{V}}\in\mathbb{R}^{3N}$.
The ensemble representation of (\ref{eq:adaptive update law}) is
expressed as
\begin{equation}
\dot{\hat{\theta}}=\left[\begin{array}{ccc}
\text{proj}\left(\Theta_{1},\hat{\theta}_{1},\overline{\theta}\right)^{\top} & \cdots & \text{proj}\left(\Theta_{N},\hat{\theta}_{N},\overline{\theta}\right)^{\top}\end{array}\right]^{\top},\label{eq:updateLaw}
\end{equation}
where $\hat{\theta}\triangleq\left(\hat{\theta}_{i}\right)_{i\in\mathcal{V}}\in\mathbb{R}^{pN}$
. Substituting (\ref{eq:updateLaw}) into the ensemble representation
of (\ref{eq: parameter estimation error}) yields
\begin{equation}
\dot{\widetilde{\theta}}=-\left[\begin{array}{ccc}
\text{proj}\left(\Theta_{1},\hat{\theta}_{1},\overline{\theta}\right)^{\top} & \cdots & \text{proj}\left(\Theta_{N},\hat{\theta}_{N},\overline{\theta}\right)^{\top}\end{array}\right]^{\top}.\label{eq:thetaTildeDot}
\end{equation}

This work focuses on spacecraft with limited sensing capabilities.
Accurate state estimation is necessary for spacecraft servicing missions,
particularly in scenarios where multiple spacecraft track and service
a target. Each spacecraft's sensing limitations may restrict it to
measuring only a subset of the target's states. Without information
sharing, the collective system may fail to reconstruct the target's
complete state. For instance, in a two-dimensional scenario, two spacecraft
that each measure only one degree of freedom and cannot communicate
would be limited to aligning collinearly with the target. However,
by sharing partial measurements, the spacecraft can collaborate to
achieve complete state reconstruction, enabling effective tracking
and servicing.

The trackability condition formalizes this idea, ensuring the spacecraft's
collective sensing provides sufficient information for accurate state
estimation. It guarantees stability in the closed-loop error system
by ensuring the eigenvalues of the matrix $\mathcal{H}$ are positive.
The following definition describes this concept.
\begin{defn}
(Trackability, \cite[Lemma 2]{Nino.Patil.ea2025}) A target agent
is said to be trackable if the following condition, known as the trackability
condition, is satisfied: $\text{rank}\left(\sum_{i\in\mathcal{V}}b_{i}C_{i}^{\top}C_{i}\right)=3$.
\end{defn}

\section{Stability Analysis}

Define the concatenated state vector $z:\mathbb{R}_{\geq0}\to\mathbb{R}^{\varphi}$
as $z\triangleq\left[\begin{array}{cccccc}
e^{\top} & r^{\top} & \tilde{\eta}^{\top} & \tilde{r}^{\top} & \rho^{\top} & \tilde{\theta}^{\top}\end{array}\right]^{\top}$, where $\varphi\triangleq(15+p)N$. Using (\ref{eq:Filtered Tracking Error}),
(\ref{eq:Filtered Estimation Error}), (\ref{eq:rho}), (\ref{eq:rDot}),
and (\ref{eq:thetaTildeDot}) yields
\begin{equation}
\dot{z}=\left[\begin{array}{c}
r-k_{1}e\\
\begin{array}{c}
\nabla_{\hat{\theta}}\Phi\widetilde{\theta}-k_{2}\mathcal{H}r+k_{2}\tilde{r}+k_{1}r-k_{1}^{2}e-\mathbf{w}+\tilde{f}+\Delta\end{array}\\
\tilde{r}-k_{3}\tilde{\eta}-\rho\\
\begin{array}{c}
\mathcal{H}\left(h-\mathbf{w}+\tilde{f}\right)+\left(k_{3}+k_{4}\right)\rho-\tilde{\eta}-k_{4}\tilde{r}\end{array}\\
\tilde{\eta}-\left(k_{3}+k_{4}\right)\tilde{r}-k_{5}\rho\\
-\left[\begin{array}{ccc}
\text{proj}\left(\Theta_{1},\hat{\theta}_{1},\overline{\theta}\right)^{\top} & \cdots & \text{proj}\left(\Theta_{N},\hat{\theta}_{N},\overline{\theta}\right)^{\top}\end{array}\right]^{\top}
\end{array}\right].\label{eq:CLES}
\end{equation}

By the Universal Approximation Theorem in \cite[Theorem 3.2]{Kidger2020},
the subsequent stability analysis requires ensuring $\kappa_{i}(t)\in\Omega$
for all $i\in\mathcal{V}$, for all $t\in\mathbb{R}_{\geq0}$. This
requirement is guaranteed to be satisfied by achieving a stability
result which constrains $z$ to a compact domain. Define the compact
domain bounding all system trajectories as
\begin{equation}
\mathcal{D}\triangleq\left\{ \iota\in\mathbb{R}^{\varphi}:\left\Vert \iota\right\Vert \leq\chi\right\} ,\label{eq:Local Domain}
\end{equation}
where $\chi\in\mathbb{R}_{>0}$ is a bounding constant.

By the continuous differentiability of $f_{k}$, it follows that $f_{k}$
is Lipschitz continuous over $\mathcal{D}$ for all $k\in\overline{\mathcal{V}}$.
Consequently, $h_{i}$ is Lipschitz continuous over $\mathcal{D}$
for all $i\in\mathcal{V}$. Hence, there exists a constant ${\tt D}\in\mathbb{R}_{\geq0}$
such that $\left\Vert h_{i}\right\Vert =\left\Vert f_{0}(\eta_{i},\hat{\zeta}_{i})-f_{i}(\eta_{i},\hat{\zeta}_{i})\right\Vert \leq{\tt D}$
for all $i\in\mathcal{V}$ and for all time. Furthermore, from (\ref{eq:LbDNN}),
there exists a constant $L_{\Phi}\in\mathbb{R}_{\geq0}$ such that
$\left\Vert \nabla_{\hat{\theta}}\Phi\right\Vert \leq L_{\Phi}$.
Similarly, by Assumption \ref{Asp: DNN Bounds} and the projection
operator, it holds that $\left\Vert \tilde{\theta}_{i}\right\Vert \leq\left\Vert \theta_{i}^{*}\right\Vert +\left\Vert \hat{\theta}_{i}\right\Vert \leq2\overline{\theta}$.
Therefore, there exists $\overline{\Delta}\triangleq2M\overline{\theta}^{2}+\overline{\varepsilon}\in\mathbb{R}_{>0}$
such that $\left\Vert \Delta\right\Vert =\left\Vert \left(\Delta_{i}\right)_{i\in\mathcal{V}}\right\Vert =\left(NR_{1,i}(\widetilde{\theta}_{i},\kappa_{i})+\varepsilon_{i}(\kappa_{i})\right)_{i\in\mathcal{V}}\leq N\left(\frac{M_{i}}{2}\left\Vert \tilde{\theta}_{i}\right\Vert ^{2}\right)_{i\in\mathcal{V}}+\overline{\varepsilon}\leq2NM\overline{\theta}^{2}+\overline{\varepsilon}=\overline{\Delta}$,
where $M\triangleq\max_{i\in\mathcal{V}}\left\{ M_{i}\right\} $,
for all $z\in\mathcal{D}$.

To facilitate the stability analysis, consider the Lyapunov function
candidate $V:\mathcal{D}\to\mathbb{R}_{\geq0}$ defined as
\begin{equation}
V(z)\triangleq\frac{1}{2}z^{\top}Pz,\label{eq:Lyapunov}
\end{equation}
where $P\triangleq\text{blkdiag}\left(I_{15N},\Gamma^{-1}\right)\in\mathbb{R}^{\varphi\times\varphi}$
and $\Gamma\triangleq\text{blkdiag}\left(\Gamma_{1},\ldots,\Gamma_{N}\right)\in\mathbb{R}^{pN\times pN}$.
By the Rayleigh quotient theorem (see \cite[Theorem 4.2.2]{Horn.Johnson1993}),
(\ref{eq:Lyapunov}) satisfies
\begin{equation}
\lambda_{1}\left\Vert z\right\Vert ^{2}\leq V(z)\leq\lambda_{2}\left\Vert z\right\Vert ^{2},\label{eq: RQ}
\end{equation}
where $\lambda_{1}\triangleq\frac{1}{2}\min\left\{ 1,\lambda_{\min}\left(\Gamma^{-1}\right)\right\} $
and $\lambda_{2}\triangleq\frac{1}{2}\max\left\{ 1,\lambda_{\max}\left(\Gamma^{-1}\right)\right\} $.
Based on the subsequent set definitions, let $\delta\triangleq\frac{\left(\overline{\Delta}+N\overline{\omega}+2LN^{2}\left(\overline{q}_{0}+\overline{\dot{q}}_{0}\right)\right)^{2}}{\underline{\lambda}_{\mathcal{H}}k_{2}}+\frac{\left(\overline{\lambda}_{\mathcal{H}}{\tt D}+2LN^{2}\overline{\lambda}_{\mathcal{H}}\left(\overline{q}_{0}+\overline{\dot{q}}_{0}\right)+N\overline{\omega}\overline{\lambda}_{\mathcal{H}}\right)^{2}}{2k_{4}}+k_{6}\overline{\theta}^{2}\overline{\lambda}_{\mathcal{J}}$
and $\lambda_{3}\triangleq\frac{1}{2}\min\left\{ \alpha_{i}\right\} _{i=\left[6\right]}$,
where
\begin{align*}
\alpha_{1} & =2-2\overline{\lambda}_{\mathcal{H}}L_{\Phi}-4LN\overline{\lambda}_{\mathcal{H}}\left(\overline{\lambda}_{\mathcal{H}}+1\right),\\
\alpha_{2} & =k_{2}\underline{\lambda}_{\mathcal{H}}-k_{2}-LN\left(k_{3}+2\left(2\overline{\lambda}_{\mathcal{H}}+1\right)+\left(1+k_{1}\right)\left(2\overline{\lambda}_{\mathcal{H}}+1\right)+2\right)-L_{\Phi}\left(1+\overline{\lambda}_{\mathcal{H}}\right)-2,\\
\alpha_{3} & =2-L_{\Phi}-LN\left(\overline{\lambda}_{\mathcal{H}}+1\right),\\
\alpha_{4} & =k_{4}-k_{2}-LN\left(\left(k_{3}+2+\left(1+k_{1}\right)\left(2\overline{\lambda}_{\mathcal{H}}+1\right)\right)+1\right)\overline{\lambda}_{\mathcal{H}}-L_{\Phi},\\
\alpha_{5} & =2k_{5}-LN\left(\overline{\lambda}_{\mathcal{H}}+1\right)-L_{\Phi},\\
\alpha_{6} & =k_{6}\underline{\lambda}_{\mathcal{J}}-L_{\Phi}\left(k_{3}+3\overline{\lambda}_{\mathcal{H}}+3\right),
\end{align*}
where $\overline{\lambda}_{(\cdot)}\triangleq\lambda_{\max}(\cdot)$
and $\underline{\lambda}_{(\cdot)}\triangleq\lambda_{\min}(\cdot)$.

For the dynamical system described by (\ref{eq:CLES}), the set of
stabilizing initial conditions $\mathcal{S}\subset\mathbb{R}^{\varphi}$
is defined as
\begin{equation}
\mathcal{S}\triangleq\left\{ \iota\in\mathbb{R}^{\varphi}:\left\Vert \iota\right\Vert \leq\sqrt{\frac{\lambda_{1}}{\lambda_{2}}\chi^{2}-\frac{\delta}{\lambda_{3}}}\right\} ,\label{eq:initial conditions}
\end{equation}
and the uniformly ultimately bounded (UUB) set $\mathcal{U}\subset\mathbb{R}^{\varphi}$
is defined as
\begin{equation}
\mathcal{U}\triangleq\left\{ \iota\in\mathbb{R}^{\varphi}:\left\Vert \iota\right\Vert \leq\sqrt{\frac{\lambda_{2}}{\lambda_{1}}\frac{\delta}{\lambda_{3}}}\right\} .\label{eq:equilibrium set}
\end{equation}
Furthermore, based on the subsequent stability analysis, the parameter
interaction matrix $\mathcal{J}\in\mathbb{R}^{pN\times pN}$ is defined
as
\begin{equation}
\mathcal{J}\triangleq(\mathcal{L}_{G}+I_{N})\otimes I_{p}.\label{eq:weightinteractionmatrix}
\end{equation}

\begin{lem}
\label{lem:trackability}If the target is trackable, then ${\tt z}^{\top}\mathcal{H}{\tt z}>0$
for any ${\tt z}\neq\mathbf{0}_{nN}$.
\begin{proof}
See \cite[Lemma 1]{Nino.Patil.ea2025}
\end{proof}
\end{lem}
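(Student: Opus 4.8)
The plan is to establish that $\mathcal{H}$ is symmetric positive definite, since the claim ${\tt z}^{\top}\mathcal{H}{\tt z}>0$ for all ${\tt z}\neq\mathbf{0}_{3N}$ is exactly positive definiteness. The starting observation is that, by (\ref{eq:Augmented Interaction Matrix}), $\mathcal{H}$ decomposes as a sum of two symmetric positive semidefinite matrices. The term $L\otimes I_{3}$ is positive semidefinite because the graph Laplacian $L$ satisfies $L\succeq0$ for an undirected graph, and the Kronecker product with $I_{3}$ preserves semidefiniteness, while $\mathcal{C}=\text{blkdiag}(b_{1}C_{1}^{\top}C_{1},\ldots,b_{N}C_{N}^{\top}C_{N})$ is positive semidefinite because each diagonal block $b_{i}C_{i}^{\top}C_{i}$ is a nonnegative scalar times a Gram matrix. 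Hence ${\tt z}^{\top}\mathcal{H}{\tt z}={\tt z}^{\top}(L\otimes I_{3}){\tt z}+{\tt z}^{\top}\mathcal{C}{\tt z}\geq0$, and it remains only to exclude equality for nonzero ${\tt z}$.

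Next I would analyze the equality case through a null-space intersection argument. Since both quadratic forms are nonnegative, ${\tt z}^{\top}\mathcal{H}{\tt z}=0$ forces ${\tt z}^{\top}(L\otimes I_{3}){\tt z}=0$ and ${\tt z}^{\top}\mathcal{C}{\tt z}=0$ simultaneously. Partitioning ${\tt z}=(z_{i})_{i\in\mathcal{V}}$ with each $z_{i}\in\mathbb{R}^{3}$, the first condition places ${\tt z}$ in the null space of $L\otimes I_{3}$; because the graph is connected by Assumption \ref{thm:connected}, $\ker L=\text{span}\{\mathbf{1}_{N}\}$, so $\ker(L\otimes I_{3})=\{\mathbf{1}_{N}\otimes v:v\in\mathbb{R}^{3}\}$, i.e., $z_{1}=\cdots=z_{N}=v$ for a common $v\in\mathbb{R}^{3}$.

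Finally I would invoke trackability. Substituting the consensus structure $z_{i}=v$ into the second condition yields $0={\tt z}^{\top}\mathcal{C}{\tt z}=\sum_{i\in\mathcal{V}}b_{i}\left\Vert C_{i}v\right\Vert ^{2}=v^{\top}\left(\sum_{i\in\mathcal{V}}b_{i}C_{i}^{\top}C_{i}\right)v$. The trackability condition asserts $\text{rank}\left(\sum_{i\in\mathcal{V}}b_{i}C_{i}^{\top}C_{i}\right)=3$, and a symmetric positive semidefinite $3\times3$ matrix of full rank is positive definite, so the only vector annihilating its quadratic form is $v=\mathbf{0}_{3}$. This gives ${\tt z}=\mathbf{0}_{3N}$, contradicting ${\tt z}\neq\mathbf{0}_{3N}$, and therefore ${\tt z}^{\top}\mathcal{H}{\tt z}>0$ for every nonzero ${\tt z}$. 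The crux of the argument, and the only place the hypotheses enter nontrivially, is this last step: connectivity collapses the admissible ${\tt z}$ onto the three-dimensional consensus subspace, and trackability is precisely the condition that upgrades the rank statement to strict positive definiteness of the aggregate output Gram matrix on that subspace, thereby closing the gap between semidefiniteness and definiteness.
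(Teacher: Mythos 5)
Your proof is correct and is the standard argument: the paper itself gives no proof for this lemma, deferring entirely to \cite[Lemma 1]{Nino.Patil.ea2025}, and your decomposition of $\mathcal{H}$ into the two positive semidefinite pieces $L\otimes I_{3}$ and $\mathcal{C}$, followed by the kernel-intersection argument, is exactly the reasoning that underlies the cited result. You are also right to invoke Assumption \ref{thm:connected} explicitly --- connectivity is what collapses $\ker(L\otimes I_{3})$ to the consensus subspace $\{\mathbf{1}_{N}\otimes v\}$, and without it the lemma would be false even under trackability; the lemma's statement omits this hypothesis only because it is a standing assumption of the paper.
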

\begin{thm}
Consider the dynamical system described by (\ref{eq:Dynamics}) and
(\ref{eq:Target Dynamics}). For any initial conditions of the states
$\left\Vert z(t_{0})\right\Vert \in\mathcal{S}$, the observer given
by (\ref{eq: Distributed Observer}), the controller given by (\ref{eq: controller-1}),
and the adaptation law given by (\ref{eq:adaptive update law}) ensure
that $z$ exponentially converges to $\mathcal{U}$ in the sense that
\begin{equation}
\left\Vert z(t)\right\Vert \leq\sqrt{\frac{\lambda_{2}}{\lambda_{1}}}\sqrt{\left\Vert z(t_{0})\right\Vert ^{2}{\rm e}^{-\frac{\lambda_{3}}{\lambda_{2}}(t-t_{0})}+\frac{\delta}{\lambda_{3}}\left(1-{\rm e}^{-\frac{\lambda_{3}}{\lambda_{2}}(t-t_{0})}\right)},\label{eq:solution-1}
\end{equation}
for all $t\in\left[t_{0},\infty\right)$, provided that $\lambda_{3}>0$,
$\chi>\sqrt{\frac{\lambda_{2}}{\lambda_{1}}\frac{\delta}{\lambda_{3}}}\sqrt{\frac{\lambda_{2}}{\lambda_{1}}+1}$,
Assumptions \ref{targetbounds}-\ref{Asp: DNN Bounds} hold, and the
target is trackable.
\end{thm}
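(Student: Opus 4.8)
The plan is to establish the result through a Lyapunov-based analysis of the closed-loop error system (\ref{eq:CLES}) using the composite function $V$ in (\ref{eq:Lyapunov}), and then to convert the resulting differential inequality into the exponential bound (\ref{eq:solution-1}) via the comparison lemma. First I would differentiate $V$ along (\ref{eq:CLES}); by the block structure of $P$ this gives $\dot V = e^\top\dot e + r^\top\dot r + \tilde\eta^\top\dot{\tilde\eta} + \tilde r^\top\dot{\tilde r} + \rho^\top\dot\rho + \tilde\theta^\top\Gamma^{-1}\dot{\tilde\theta}$, into which I substitute the six component dynamics. The last term is handled with the standard property of the smooth projection operator of \cite[Appendix E]{Krstic1995}, namely $-\tilde\theta^\top\Gamma^{-1}\text{proj}(\Theta_i,\hat\theta_i,\overline\theta) \le -\tilde\theta^\top\big(\nabla_{\hat\theta}\Phi^\top(\hat\zeta+k_1\eta) - k_6(\cdot)\big)$, so that projection can only improve the bound while keeping $\|\hat\theta\|\le\overline\theta$.

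Next I would exploit the structure built into the designs. The observer (\ref{eq: Distributed Observer}), filter (\ref{eq:rho implemented}), and their ensemble forms (\ref{eq:observer})--(\ref{eq:rho}) generate antisymmetric pairings among $\tilde\eta$, $\tilde r$, and $\rho$ that cancel identically in $\dot V$, while the gains $(k_3^2-2)$ and $(2k_3+k_4+k_5)$ are tuned so the surviving quadratic terms are negative. The coupling through the interaction matrix $\mathcal H$ is rendered sign-definite by Lemma \ref{lem:trackability}: trackability gives $\mathcal H>0$, hence $r^\top\mathcal H r\ge\underline\lambda_{\mathcal H}\|r\|^2$, which is the dominant negative contribution collected in $\alpha_2$, and likewise the $k_4$ damping on $\tilde r$ feeds $\alpha_4$. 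The consensus/forgetting term supplies $\tilde\theta^\top\mathcal J\tilde\theta\ge\underline\lambda_{\mathcal J}\|\tilde\theta\|^2$ through the parameter interaction matrix $\mathcal J$ of (\ref{eq:weightinteractionmatrix}), giving the leading term of $\alpha_6$, while the Jacobian cross terms $r^\top\nabla_{\hat\theta}\Phi\tilde\theta$ are only partially matched by the adaptation law and the mismatch is bounded using $\|\nabla_{\hat\theta}\Phi\|\le L_\Phi$; the remaining diagonal quadratic terms in $\|e\|^2$, $\|\tilde\eta\|^2$, and $\|\rho\|^2$ produce $\alpha_1$, $\alpha_3$, and $\alpha_5$.

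The residual perturbations are $\mathbf w$, $\tilde f$, $h$, and $\Delta$, which on the compact set $\mathcal D$ of (\ref{eq:Local Domain}) satisfy $\|\mathbf w\|\le N\overline\omega$, $\|\Delta\|\le\overline\Delta$, $\|h_i\|\le{\tt D}$, and, through the Lipschitz constant $L$ of each $f_k$, the ensemble relation $\eta=\mathcal H e$, and Assumption \ref{targetbounds} with $\overline q_0,\overline{\dot q}_0$, a bound on $\tilde f$ of the form appearing inside $\delta$. I would split each state-perturbation cross term with Young's inequality, absorbing half of each product into the matching negative quadratic and collecting the leftover constants into the three contributions that constitute $\delta$ (the $r$-channel term over $\underline\lambda_{\mathcal H}k_2$, the $\tilde r$-channel term over $2k_4$, and the forgetting constant $k_6\overline\theta^2\overline\lambda_{\mathcal J}$). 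This yields $\dot V \le -\lambda_3\|z\|^2 + \delta$ with $\lambda_3=\tfrac12\min\{\alpha_i\}_{i\in[6]}$; the hypothesis $\lambda_3>0$ makes the quadratic form negative definite, and using $V\le\lambda_2\|z\|^2$ from (\ref{eq: RQ}) gives the linear inequality $\dot V \le -\tfrac{\lambda_3}{\lambda_2}V+\delta$, whose solution combined with (\ref{eq: RQ}) delivers (\ref{eq:solution-1}).

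The main obstacle is the invariance argument that makes the analysis self-consistent, since every bound above holds only for $z\in\mathcal D$, equivalently $\kappa_i\in\Omega$. I would argue that if $\|z(t_0)\|\in\mathcal S$ as in (\ref{eq:initial conditions}), the right-hand side of (\ref{eq:solution-1}) is a convex combination of $\|z(t_0)\|^2$ and $\delta/\lambda_3$, so $\|z(t)\|^2\le\tfrac{\lambda_2}{\lambda_1}\max\{\|z(t_0)\|^2,\delta/\lambda_3\}$; the definition of $\mathcal S$ disposes of the first case and the standing condition $\chi>\sqrt{\tfrac{\lambda_2}{\lambda_1}\tfrac{\delta}{\lambda_3}}\sqrt{\tfrac{\lambda_2}{\lambda_1}+1}$ disposes of the second, together forcing $\|z(t)\|\le\chi$ for all $t\ge t_0$. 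The same condition guarantees that $\mathcal S$ is nonempty and that $\mathcal U\subset\mathcal D$, so the local result is nonvacuous and the trajectory never escapes the region on which the Lb-DNN approximation and the Lipschitz bounds were established. The residual difficulty is then purely the bookkeeping of verifying that the Young's-inequality splits reproduce exactly the stated coefficients $\alpha_1,\ldots,\alpha_6$ and the three summands of $\delta$.
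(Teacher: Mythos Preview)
Your proposal is correct and follows essentially the same Lyapunov-based route as the paper: differentiate $V$ along (\ref{eq:CLES}), invoke the projection property and Lemma~\ref{lem:trackability}, exploit the antisymmetric cancellations among $\tilde\eta,\tilde r,\rho$ that the filter and observer were engineered to produce, bound $\tilde f$, $h$, $\Delta$, $\mathbf w$ on $\mathcal D$, and split cross terms with Young's inequality to reach $\dot V\le-\lambda_3\|z\|^2+\delta$. The only cosmetic difference is your invariance step, where you bound the right-hand side of (\ref{eq:solution-1}) by $\tfrac{\lambda_2}{\lambda_1}\max\{\|z(t_0)\|^2,\delta/\lambda_3\}$ via convexity, whereas the paper uses the looser sum bound $\tfrac{\lambda_2}{\lambda_1}(\|z(t_0)\|^2+\delta/\lambda_3)$ and then reads off the definition of $\mathcal S$ directly; both arguments succeed under the stated gain condition on $\chi$, and the paper additionally spells out the chain $\|\kappa_i\|\le\|\eta\|+\|\hat\zeta\|\le(\overline\lambda_{\mathcal H}(k_1+1)+k_3+2)\chi=\Upsilon$ to close the $\kappa_i\in\Omega$ loop you flagged.
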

\begin{proof}
Substituting (\ref{eq:CLES}) into the time-derivative of (\ref{eq:Lyapunov}),
using (\ref{eq: parameter estimation error}), and simplifying yields
\begin{align}
\dot{V}(z) & =-k_{1}e^{\top}e-k_{2}r^{\top}\mathcal{H}r-k_{3}\tilde{\eta}^{\top}\tilde{\eta}-k_{4}\tilde{r}^{\top}\tilde{r}-k_{5}\rho^{\top}\rho+\left(1-k_{1}^{2}\right)r^{\top}e+k_{2}r^{\top}\tilde{r}+k_{1}r^{\top}r\nonumber \\
 & +\widetilde{\theta}^{\top}\nabla_{\hat{\theta}}^{\top}\Phi r+\tilde{r}^{\top}\mathcal{H}\left(h-\mathbf{w}+\tilde{f}\right)+r^{\top}\left(\Delta-\mathbf{w}+\tilde{f}\right)\nonumber \\
 & -\tilde{\theta}^{\top}\Gamma^{-1}\left[\begin{array}{ccc}
\text{proj}\left(\Theta_{1},\overline{\theta}\right)^{\top} & \cdots & \text{proj}\left(\Theta_{N},\overline{\theta}\right)^{\top}\end{array}\right].\label{eq:Vdot}
\end{align}
Invoking \cite[Lemma E.1.IV]{Krstic1995}, using (\ref{eq:Augmented Interaction Matrix})
and the definition of $\mathcal{J}$ in (\ref{eq:weightinteractionmatrix}),
and using the ensemble representation of (\ref{eq: parameter estimation error})
yields
\begin{align}
-\tilde{\theta}^{\top}\Gamma^{-1}\begin{bmatrix}\text{proj}\left(\Theta_{1},\hat{\theta}_{1},\overline{\theta}\right)\\
\vdots\\
\text{proj}\left(\Theta_{N},\hat{\theta}_{N},\overline{\theta}\right)
\end{bmatrix} & \leq-\tilde{\theta}^{\top}\nabla_{\hat{\theta}}^{\top}\Phi\left(\hat{\zeta}+k_{1}\eta\right)+k_{6}\tilde{\theta}^{\top}\mathcal{\mathcal{J}}\theta^{\ast}-k_{6}\tilde{\theta}^{\top}\mathcal{\mathcal{J}}\tilde{\theta}.\label{eq:projection inequality}
\end{align}
Using (\ref{eq:projection inequality}), applying the triangle inequality
and the Cauchy-Schwarz inequality to the right-hand-side of (\ref{eq:Vdot})
and using the definitions of $\underline{\lambda}_{\mathcal{H}}$,
$\underline{\lambda}_{\mathcal{J}}$, $\overline{\lambda}_{\mathcal{J}}$,
$L_{\Phi}$, $\overline{\Delta}$, and $\overline{\omega}$ and Assumption
\ref{Asp: DNN Bounds} to the resulting expression yields
\begin{align}
\dot{V}(z) & \leq-\left\Vert e\right\Vert ^{2}-k_{2}\underline{\lambda}_{\mathcal{H}}\left\Vert r\right\Vert ^{2}-k_{3}\left\Vert \tilde{\eta}\right\Vert ^{2}-k_{4}\left\Vert \tilde{r}\right\Vert ^{2}-k_{5}\left\Vert \rho\right\Vert ^{2}-k_{6}\underline{\lambda}_{\mathcal{J}}\left\Vert \tilde{\theta}\right\Vert ^{2}+k_{6}\overline{\theta}\overline{\lambda}_{\mathcal{J}}\left\Vert \tilde{\theta}\right\Vert \nonumber \\
 & +k_{2}\left\Vert r\right\Vert \left\Vert \tilde{r}\right\Vert +\left\Vert r\right\Vert ^{2}+L_{\Phi}\left\Vert \widetilde{\theta}\right\Vert \left(\left\Vert r\right\Vert +\left\Vert \hat{\zeta}\right\Vert +\left\Vert \eta\right\Vert \right)\nonumber \\
 & +\overline{\lambda}_{\mathcal{H}}\left\Vert \tilde{r}\right\Vert \left(\left\Vert h\right\Vert +\left\Vert \tilde{f}\right\Vert +N\overline{\omega}\right)+\left\Vert r\right\Vert \left(\overline{\Delta}+N\overline{\omega}+\left\Vert \tilde{f}\right\Vert \right),\label{eq:Vdot-1}
\end{align}
for all $z\in\mathcal{D}$. Using the ensemble representation of (\ref{eq:Filtered Tracking Error}),
(\ref{eq:position estimation error}), (\ref{eq:ensembleEta}), and
(\ref{eq:Hr}) yields
\begin{equation}
\left\Vert \hat{\zeta}\right\Vert \leq\overline{\lambda}_{\mathcal{H}}\left\Vert e\right\Vert +\overline{\lambda}_{\mathcal{H}}\left\Vert r\right\Vert +k_{3}\left\Vert \tilde{\eta}\right\Vert +\left\Vert \tilde{r}\right\Vert +\left\Vert \rho\right\Vert .\label{eq:zetaHat}
\end{equation}
Using (\ref{eq:zetaHat}), the ensemble representation of (\ref{eq:tracking error}),
the triangle inequality, Assumption \ref{targetbounds}, and the definition
of $\overline{\lambda}_{\mathcal{H}}$ yields the bound
\begin{align}
\left\Vert \tilde{f}\right\Vert  & \leq LN\left(\left(4\overline{\lambda}_{\mathcal{H}}+2\right)\left\Vert e\right\Vert +\left(2\overline{\lambda}_{\mathcal{H}}+1\right)\left\Vert r\right\Vert +k_{3}\left\Vert \tilde{\eta}\right\Vert +\left\Vert \tilde{r}\right\Vert +\left\Vert \rho\right\Vert +2N\left(\overline{q}_{0}+\overline{\dot{q}}_{0}\right)\right).\label{eq:fTilde}
\end{align}
Using (\ref{eq:ensembleEta}) and (\ref{eq:zetaHat}) yields
\begin{align}
L_{\Phi}\left\Vert \widetilde{\theta}\right\Vert \left(\left\Vert r\right\Vert +\left\Vert \hat{\zeta}\right\Vert +k_{1}\left\Vert \eta\right\Vert \right) & \leq2\overline{\lambda}_{\mathcal{H}}L_{\Phi}\left\Vert \widetilde{\theta}\right\Vert \left\Vert e\right\Vert +L_{\Phi}\left(1+\overline{\lambda}_{\mathcal{H}}\right)\left\Vert \widetilde{\theta}\right\Vert \left\Vert r\right\Vert +k_{3}L_{\Phi}\left\Vert \widetilde{\theta}\right\Vert \left\Vert \tilde{\eta}\right\Vert \nonumber \\
 & +L_{\Phi}\left\Vert \widetilde{\theta}\right\Vert \left\Vert \tilde{r}\right\Vert +L_{\Phi}\left\Vert \widetilde{\theta}\right\Vert \left\Vert \rho\right\Vert .\label{eq:bound-1}
\end{align}
By using (\ref{eq:fTilde}) and (\ref{eq:bound-1}), (\ref{eq:Vdot-1})
is upper bounded as
\begin{align}
\dot{V}(z) & \leq-\left\Vert e\right\Vert ^{2}-k_{2}\underline{\lambda}_{\mathcal{H}}\left\Vert r\right\Vert ^{2}+\left\Vert r\right\Vert ^{2}+LN\left(2\overline{\lambda}_{\mathcal{H}}+1\right)\left\Vert r\right\Vert ^{2}-k_{3}\left\Vert \tilde{\eta}\right\Vert ^{2}-k_{4}\left\Vert \tilde{r}\right\Vert ^{2}\nonumber \\
 & -k_{5}\left\Vert \rho\right\Vert ^{2}-k_{6}\underline{\lambda}_{\mathcal{J}}\left\Vert \tilde{\theta}\right\Vert ^{2}+k_{2}\left\Vert r\right\Vert \left\Vert \tilde{r}\right\Vert +2\overline{\lambda}_{\mathcal{H}}L_{\Phi}\left\Vert \widetilde{\theta}\right\Vert \left\Vert e\right\Vert +L_{\Phi}\left(1+\overline{\lambda}_{\mathcal{H}}\right)\left\Vert \widetilde{\theta}\right\Vert \left\Vert r\right\Vert \nonumber \\
 & +L_{\Phi}\left\Vert \widetilde{\theta}\right\Vert \left\Vert \tilde{r}\right\Vert +L_{\Phi}\left\Vert \widetilde{\theta}\right\Vert \left\Vert \rho\right\Vert +LN\overline{\lambda}_{\mathcal{H}}\left(4\overline{\lambda}_{\mathcal{H}}+2\right)\left\Vert e\right\Vert \left\Vert \tilde{r}\right\Vert +k_{3}LN\overline{\lambda}_{\mathcal{H}}\left\Vert \tilde{\eta}\right\Vert \left\Vert \tilde{r}\right\Vert \nonumber \\
 & +LN\overline{\lambda}_{\mathcal{H}}\left\Vert \rho\right\Vert \left\Vert \tilde{r}\right\Vert +LN\left(4\overline{\lambda}_{\mathcal{H}}+2\right)\left\Vert e\right\Vert \left\Vert r\right\Vert +LNk_{3}\left\Vert \tilde{\eta}\right\Vert \left\Vert r\right\Vert +LN\left\Vert \tilde{r}\right\Vert \left\Vert r\right\Vert \nonumber \\
 & +LN\left\Vert \rho\right\Vert \left\Vert r\right\Vert +\overline{\lambda}_{\mathcal{H}}{\tt D}\left\Vert \tilde{r}\right\Vert +2LN^{2}\overline{\lambda}_{\mathcal{H}}\left(\overline{q}_{0}+\overline{\dot{q}}_{0}\right)\left\Vert \tilde{r}\right\Vert +N\overline{\omega}\overline{\lambda}_{\mathcal{H}}\left\Vert \tilde{r}\right\Vert +k_{3}L_{\Phi}\left\Vert \widetilde{\theta}\right\Vert \left\Vert \tilde{\eta}\right\Vert \nonumber \\
 & +\overline{\Delta}\left\Vert r\right\Vert +N\overline{\omega}\left\Vert r\right\Vert +2LN^{2}\left(\overline{q}_{0}+\overline{\dot{q}}_{0}\right)\left\Vert r\right\Vert +k_{6}\overline{\theta}\overline{\lambda}_{\mathcal{J}}\left\Vert \tilde{\theta}\right\Vert +LN\overline{\lambda}_{\mathcal{H}}\left\Vert \tilde{r}\right\Vert ^{2},\label{eq:vDot}
\end{align}
for all $z\in\mathcal{D}$. By using Young's inequality and completing
the square, (\ref{eq:vDot}) is upper bounded as{\small{}
\begin{align}
\dot{V}(z) & \leq-\left(1-\overline{\lambda}_{\mathcal{H}}L_{\Phi}-2LN\overline{\lambda}_{\mathcal{H}}\left(\overline{\lambda}_{\mathcal{H}}+1\right)\right)\left\Vert e\right\Vert ^{2}\nonumber \\
 & -\frac{1}{2}\left(k_{2}\underline{\lambda}_{\mathcal{H}}-k_{2}-LN\left(k_{3}+2\left(2\overline{\lambda}_{\mathcal{H}}+1\right)+\left(1+k_{1}\right)\left(2\overline{\lambda}_{\mathcal{H}}+1\right)+2\right)\right.\nonumber \\
 & \left.-L_{\Phi}\left(1+\overline{\lambda}_{\mathcal{H}}\right)-2\right)\left\Vert r\right\Vert ^{2}\nonumber \\
 & -\frac{k_{3}}{2}\left(2-L_{\Phi}-LN\left(\overline{\lambda}_{\mathcal{H}}+1\right)\right)\left\Vert \tilde{\eta}\right\Vert ^{2}\nonumber \\
 & -\frac{1}{2}\left(k_{4}-k_{2}-LN\left(\left(k_{3}+2+\left(1+k_{1}\right)\left(2\overline{\lambda}_{\mathcal{H}}+1\right)\right)+1\right)\overline{\lambda}_{\mathcal{H}}-L_{\Phi}\right)\left\Vert \tilde{r}\right\Vert ^{2}\nonumber \\
 & -\frac{1}{2}\left(2k_{5}-LN\left(\overline{\lambda}_{\mathcal{H}}+1\right)-L_{\Phi}\right)\left\Vert \rho\right\Vert ^{2}\nonumber \\
 & -\frac{1}{2}\left(k_{6}\underline{\lambda}_{\mathcal{J}}-L_{\Phi}\left(k_{3}+3\overline{\lambda}_{\mathcal{H}}+3\right)\right)\left\Vert \widetilde{\theta}\right\Vert ^{2}+k_{6}\overline{\theta}^{2}\overline{\lambda}_{\mathcal{J}}\nonumber \\
 & +\frac{\left(\overline{\Delta}+N\overline{\omega}+2LN^{2}\left(\overline{q}_{0}+\overline{\dot{q}}_{0}\right)\right)^{2}}{\underline{\lambda}_{\mathcal{H}}k_{2}}+\frac{\left(\overline{\lambda}_{\mathcal{H}}{\tt D}+2LN^{2}\overline{\lambda}_{\mathcal{H}}\left(\overline{q}_{0}+\overline{\dot{q}}_{0}\right)+N\overline{\omega}\overline{\lambda}_{\mathcal{H}}\right)^{2}}{2k_{4}},\label{eq:vDot-1}
\end{align}
}for all $z\in\mathcal{D}$. Using the definitions of $\lambda_{3}$
and $z$ yields
\begin{equation}
\dot{V}(z)\leq-\lambda_{3}\left\Vert z\right\Vert ^{2}+\delta,\label{eq:vDot-2}
\end{equation}
for all $z\in\mathcal{D}$. From (\ref{eq: RQ}), it follows that
$-\lambda_{3}\left\Vert z\right\Vert ^{2}\leq-\frac{\lambda_{3}}{\lambda_{2}}V(z)$.
Therefore, (\ref{eq:vDot-2}) is upper bounded as
\begin{equation}
\dot{V}(z)\leq-\frac{\lambda_{3}}{\lambda_{2}}V(z)+\delta,\label{eq:vDot-3}
\end{equation}
for all $z\in\mathcal{D}$. Solving the differential inequality given
by (\ref{eq:vDot-3}) yields
\begin{equation}
V\left(z(t)\right)\leq V\left(z(t_{0})\right){\rm e}^{-\frac{\lambda_{3}}{\lambda_{2}}(t-t_{0})}+\frac{\lambda_{2}\delta}{\lambda_{3}}\left(1-{\rm e}^{-\frac{\lambda_{3}}{\lambda_{2}}(t-t_{0})}\right),\label{eq:vsol}
\end{equation}
for all $z\in\mathcal{D}$. From (\ref{eq: RQ}), it follows that
$V\left(z(t_{0})\right){\rm e}^{-\frac{\lambda_{3}}{\lambda_{2}}(t-t_{0})}\leq\lambda_{2}\left\Vert z(t_{0})\right\Vert ^{2}{\rm e}^{-\frac{\lambda_{3}}{\lambda_{2}}(t-t_{0})}$
and also $\lambda_{1}\left\Vert z(t)\right\Vert ^{2}\leq V\left(z(t)\right)$.
Therefore, (\ref{eq:vsol}) can be bounded as
\begin{equation}
\lambda_{1}\left\Vert z(t)\right\Vert ^{2}\leq\lambda_{2}\left\Vert z(t_{0})\right\Vert ^{2}{\rm e}^{-\frac{\lambda_{3}}{\lambda_{2}}(t-t_{0})}+\frac{\lambda_{2}\delta}{\lambda_{3}}\left(1-{\rm e}^{-\frac{\lambda_{3}}{\lambda_{2}}(t-t_{0})}\right),\label{eq:vsol-1}
\end{equation}
for all $z\in\mathcal{D}$. Solving (\ref{eq:vsol-1}) for $\left\Vert z(t)\right\Vert $
yields
\begin{equation}
\left\Vert z(t)\right\Vert \leq\sqrt{\frac{\lambda_{2}}{\lambda_{1}}}\sqrt{\left\Vert z(t_{0})\right\Vert ^{2}{\rm e}^{-\frac{\lambda_{3}}{\lambda_{2}}(t-t_{0})}+\frac{\delta}{\lambda_{3}}\left(1-{\rm e}^{-\frac{\lambda_{3}}{\lambda_{2}}(t-t_{0})}\right)},\label{eq:solution}
\end{equation}
for all $z\in\mathcal{D}$. The UUB set defined by (\ref{eq:equilibrium set})
is obtained by taking the limit as $t\to\infty$ of the right-hand
side of (\ref{eq:solution}), yielding $\lim_{t\to\infty}\left\Vert z(t)\right\Vert \leq\sqrt{\frac{\lambda_{2}}{\lambda_{1}}\frac{\delta}{\lambda_{3}}}$,
or $z(t)\in\mathcal{U}$ as $t\to\infty$. The set of stabilizing
initial conditions defined by (\ref{eq:initial conditions})\textbf{
}follows from an upper bound on (\ref{eq:solution}) given by $\left\Vert z(t)\right\Vert \leq\sqrt{\frac{\lambda_{2}}{\lambda_{1}}}\sqrt{\left\Vert z(t_{0})\right\Vert ^{2}+\frac{\delta}{\lambda_{3}}}$.
From the definition of $\mathcal{D}$, the condition $z\in\mathcal{D}$
holds if and only if $\sqrt{\frac{\lambda_{2}}{\lambda_{1}}}\sqrt{\left\Vert z(t_{0})\right\Vert ^{2}+\frac{\delta}{\lambda_{3}}}\leq\chi$,
which is equivalent to $\left\Vert z(t_{0})\right\Vert \leq\sqrt{\frac{\lambda_{1}}{\lambda_{2}}\chi^{2}-\frac{\delta}{\lambda_{3}}}$,
or $\left\Vert z(t_{0})\right\Vert \in\mathcal{S}$. For $\mathcal{S}$
to be nonempty and $\mathcal{U}\subset\mathcal{S}$, the feasibility
condition $\chi>\sqrt{\frac{\lambda_{2}}{\lambda_{1}}\frac{\delta}{\lambda_{3}}}\sqrt{\frac{\lambda_{2}}{\lambda_{1}}+1}$
must hold. Let $\Upsilon$, in (\ref{eq:Omega}), be defined as $\Upsilon\triangleq\left(\overline{\lambda}_{\mathcal{H}}\left(k_{1}+1\right)+k_{3}+2\right)\chi$.
By the definition of $\Omega$ in (\ref{eq:Omega}), it follows that
$\iota\in\mathcal{D}$ implies $\left\Vert \iota\right\Vert \leq\chi\leq\Upsilon$
for all time. To establish that $\kappa_{i}\in\Omega$ for all $i\in\mathcal{V}$,
for ensuring the universal approximation property holds, observe that
$\left\Vert \kappa_{i}\right\Vert \leq\sqrt{\left\Vert \eta_{i}\right\Vert ^{2}+\left\Vert \hat{\zeta}_{i}\right\Vert ^{2}}\leq\left\Vert \eta\right\Vert +\left\Vert \hat{\zeta}\right\Vert \leq\left(\overline{\lambda}_{\mathcal{H}}\left(k_{1}+1\right)+k_{3}+2\right)\chi=\Upsilon$.
This confirms $\kappa_{i}\in\Omega$ for all $i\in\mathcal{V}$ and
for all time. Consequently, $\mathcal{U}\subset\mathcal{S}\subset\mathcal{D}$,
ensuring that the universal approximation property and Lipschitz property
are satisfied, the state remains bounded, and the trajectories converge
to a nonempty domain that is strictly smaller than the set of stabilizing
initial conditions.

Since $\left\Vert z\right\Vert \leq\chi$ implies $\left\Vert e\right\Vert ,\left\Vert r\right\Vert ,\left\Vert \tilde{\eta}\right\Vert ,\left\Vert \tilde{r}\right\Vert ,\left\Vert \rho\right\Vert ,\left\Vert \tilde{\theta}\right\Vert \leq\chi$,
the states $e$, $r$, $\tilde{\eta}$, $\tilde{r}$, $\rho$ and
$\tilde{\theta}$ remain bounded. As $\kappa_{i}\in\Omega$, $\kappa_{i}$
is also bounded for all $i\in\mathcal{V}$. The boundedness of $\hat{\theta}_{i}$,
enforced by the projection operator, ensures $\Phi_{i}(\kappa_{i},\hat{\theta}_{i})$
is bounded for all $i\in\mathcal{V}$. The boundedness of $e_{i}$
implies $\eta_{i}$ is bounded, as given by (\ref{eq:relative position sychronization}),
for all $i\in\mathcal{V}$. Similarly, boundedness of $e$, $r$,
$\tilde{\eta}$, $\tilde{r}$, and $\rho$ ensures $\hat{\zeta}$
is bounded by (\ref{eq:zetaHat}). Since $e$ is bounded and $q_{0}$
is bounded by Assumption \ref{targetbounds}, $q_{i}$ is bounded
for all $i\in\mathcal{V}$. Likewise, boundedness of $r$, and $\dot{q}_{0}$
by Assumption \ref{targetbounds}, implies $\dot{q}_{i}$ is bounded
for all $i\in\mathcal{V}$. Consequently, the boundedness of $g_{i}^{-1}$,
$\Phi_{i}$, $\eta_{i}$, $\hat{\zeta}_{i}$, $\tilde{\eta}$, and
$\rho$ ensures $u_{i}$ is bounded, as given by (\ref{eq: controller-1}),
for all $i\in\mathcal{V}$. Since $g_{i}$, $u_{i}$, $\tilde{\eta}_{i}$,
$\rho_{i}$, and $\hat{\zeta}_{i}$ are bounded, the observer states
$\dot{\hat{\eta}}_{i}$ and $\dot{\hat{\zeta}}_{i}$ are bounded by
(\ref{eq: Distributed Observer}), for all $i\in\mathcal{V}$. Similarly,
boundedness of $\Phi_{i}$, $\eta_{i}$, $\hat{\zeta}_{i}$, and $\hat{\theta}_{i}$
ensures $\dot{\hat{\theta}}_{i}$ is bounded for all $i\in\mathcal{V}$.
Therefore, since $u_{i}$, $\rho_{i}$, $\dot{\hat{\eta}}_{i}$, $\dot{\hat{\zeta}}$,
and $\dot{\hat{\theta}}_{i}$ are bounded for all $i\in\mathcal{V}$,
all implemented signals remain bounded for all time.
\end{proof}

\section{Simulation}

\begin{figure}
\begin{centering}
\includegraphics[width=0.35\columnwidth]{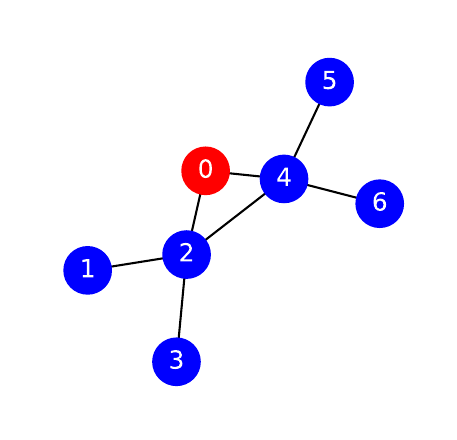}
\par\end{centering}
\caption{\label{fig:Plots-2}Communication Topology}
\end{figure}
In addition to the stability analysis, a simulation is included to
provide empirical evidence of the controller performance. The scenario
involves a network of $N=6$ servicer spacecraft tasked with tracking
a single defunct spacecraft, simulating an approach and servicing
operation. The communication topology governing the multi-spacecraft
network is depicted in Fig. \ref{fig:Plots-2}. The simulation spans
a duration of 360 seconds.

\begin{table}
\centering{}\caption{\label{tab:Result-Comparison}Spacecraft Parameters}
{\tiny{}}%
\begin{tabular}{|c|c|c|}
\hline 
Spacecraft Index & Mass (kg) & Cross-Sectional Area $(m^{2})$\tabularnewline
\hline 
\hline 
\multirow{1}{*}{$q_{0}$} & \multirow{1}{*}{10,000} & 1,000\tabularnewline
\hline 
$q_{1}$ & 640.1074 & 10.1267\tabularnewline
\hline 
$q_{2}$ & 1,087.4786 & 17.9324\tabularnewline
\hline 
$q_{3}$ & 25.1687 & 20.4416\tabularnewline
\hline 
$q_{4}$ & 470.9405 & 27.4020\tabularnewline
\hline 
$q_{5}$ & 241.4649 & 21.5405\tabularnewline
\hline 
$q_{6}$ & 161.1994 & 34.5757\tabularnewline
\hline 
\end{tabular}
\end{table}
The defunct spacecraft is initialized in a near-Earth elliptical orbit,
characterized by a periapsis altitude of $300\times10^{3}$ meters,
an apoapsis altitude of $700\times10^{3}$ meters, and an inclination
angle $\frac{\pi}{6}$ radians. The semi-major axis, $a$, is calculated
as the arithmetic mean of the periapsis and apoapsis altitudes. The
initial orbital velocity is derived from the vis-viva equation, $v=\sqrt{\mu\left(\frac{2}{r_{\text{periapsis}}}-\frac{1}{a}\right)}$,
where $\mu$ denotes the standard gravitational parameter of Earth.

\begin{figure}
\begin{centering}
\includegraphics[width=1\columnwidth]{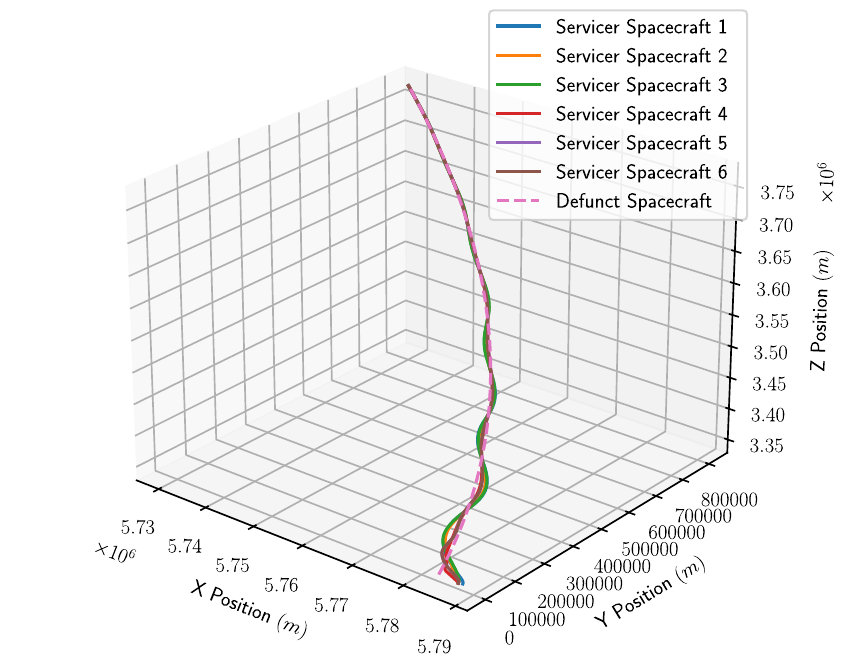}
\par\end{centering}
\caption{\label{fig:Plots}3D visualization of the trajectories of all servicer
spacecraft (solid lines) and the defunct spacecraft (dotted line).
For visual clarity, only the first 60 seconds of the simulation.}
\end{figure}
The initial positions of the servicer spacecraft are offset from the
defunct spacecraft's initial conditions. These offsets are generated
randomly, with radial distances uniformly distributed in the range
$U\left(2500,5000\right)$ meters, and angular deviations in azimuth
($\gamma$) and elevation ($\phi$) sampled from $U\left(-0.5,0.5\right)$
radians. Each servicer spacecraft's initial velocity matches the defunct
spacecraft's orbital velocity to ensure relative dynamics are dominated
by initial positional offsets.

\begin{figure}
\begin{centering}
\includegraphics[width=1\columnwidth]{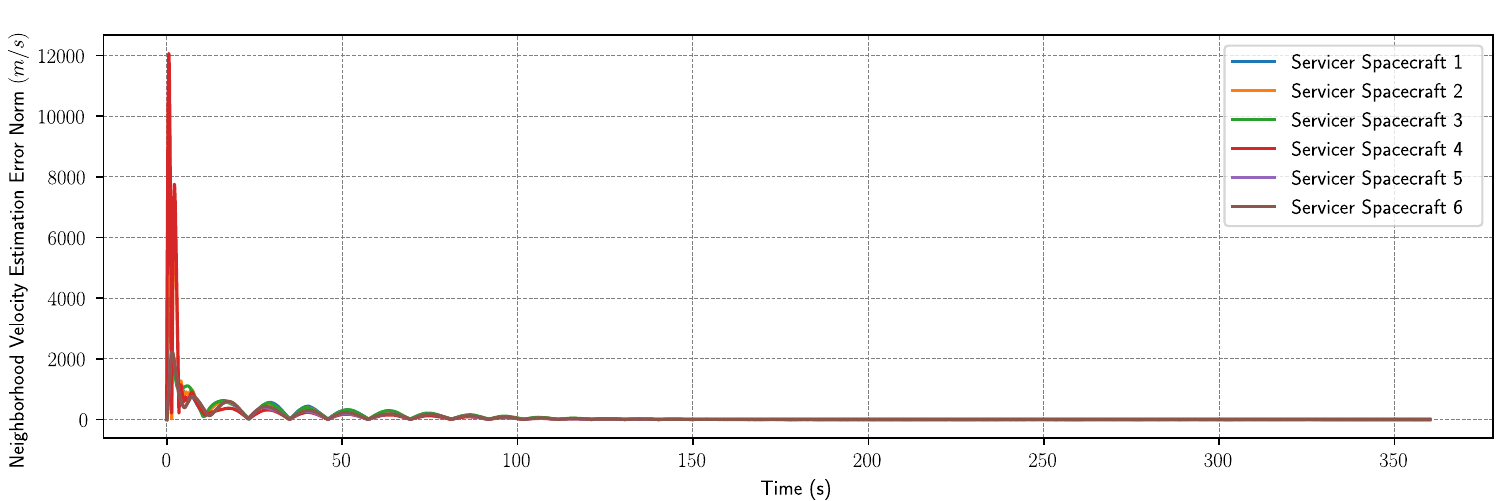}
\par\end{centering}
\caption{\label{fig:Plots-1}Plot showing the norm of the neighborhood velocity
estimation error $\left(\left\Vert \tilde{\zeta}_{i}\right\Vert \right)$
over time for all servicer spacecraft.}
\end{figure}
Each servicer spacecraft employs an output matrix $C_{i}\in\mathbb{R}^{p_{i}\times3}$,
where the entries are uniformly sampled from $U\left(-2,2\right)$.
The matrix row dimension, $p_{i}$, is randomly selected from $p_{i}\in\left\{ 1,2,3,4\right\} $.
The specific matrices selected for this simulation are as follows:
\begin{align*}
C_{1} & =\left[\begin{array}{ccc}
1.5124 & -1.8904 & 0.6818\end{array}\right],\ C_{2}=\left[\begin{array}{ccc}
-0.2772 & 1.7565 & 1.1135\\
0.8638 & 1.2110 & -1.6287
\end{array}\right],\\
C_{3} & =\left[\begin{array}{ccc}
1.5055 & 1.5784 & -1.6598\\
-1.8437 & -1.3206 & 1.5125
\end{array}\right],\ C_{4}=\left[\begin{array}{ccc}
0.3722 & 0.6866 & -0.3528\\
-1.2097 & -0.8414 & -1.4315\\
1.1332 & -0.3498 & -1.8633
\end{array}\right],\\
C_{5} & =\left[\begin{array}{ccc}
1.9554 & 0.9926 & -0.8782\\
1.1571 & -1.5870 & -0.2084\\
1.6343 & -0.8255 & -0.8488
\end{array}\right],\ C_{6}=\left[\begin{array}{ccc}
0.0991 & -1.6655 & 1.6674\end{array}\right].
\end{align*}

\begin{figure}
\begin{centering}
\includegraphics[width=1\columnwidth]{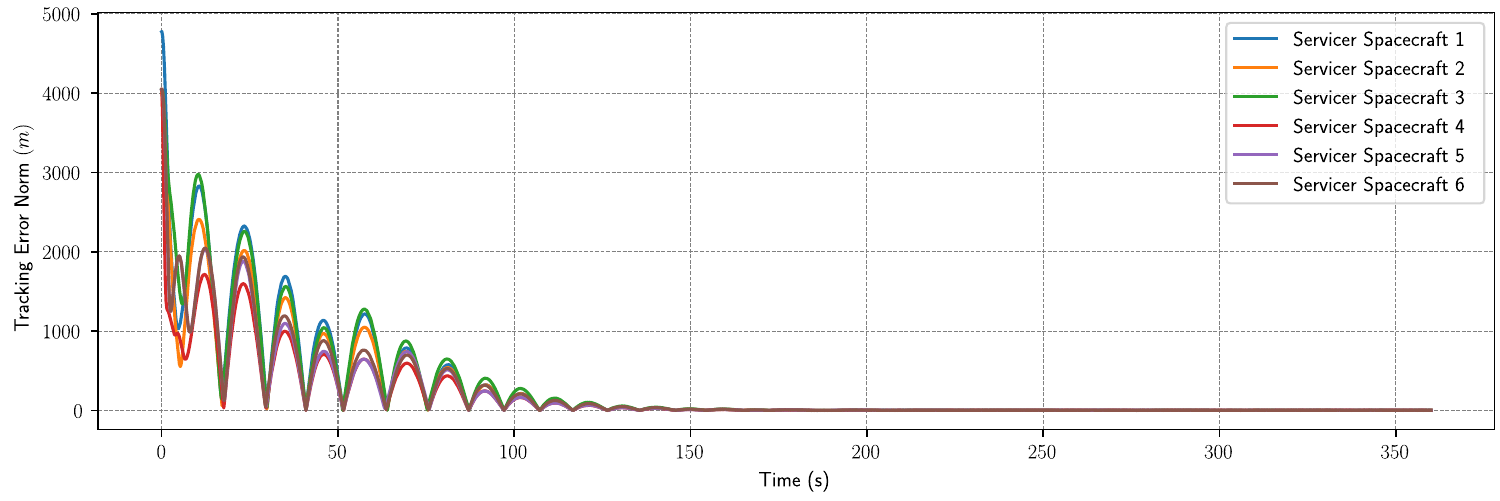}
\par\end{centering}
\caption{\label{fig:Plots-1-1}Plot showing the norm of the tracking error
$\left(\left\Vert e_{i}\right\Vert \right)$ over time for all servicer
spacecraft.}
\end{figure}
Each servicer spacecraft employs an Lb-DNN consisting of 4 hidden
layers with 4 neurons per hidden layer. This configuration results
in a total of 103 weights per spacecraft. The Lb-DNNs use the $\tanh$
activation function in the output layer, while the hidden layers employ
the Swish activation function \cite{Ramachandran.Zoph.ea2017}. The
weights are initialized using the Kaiming He initialization method
\cite{He.Zhang.ea2015} to leverage the Swish activation's smooth
approximation of the ReLU function. The control gains are selected
as $k_{i}=0.65$ for $i\in\{1,\ldots,5\}$, $k_{6}=0.0001$, and $\Gamma_{i}=0.01\cdot I_{103}$
for $i\in\{1,\ldots,6\}$.

The dynamics governing the servicer and defunct spacecraft follow
the nonlinear equations detailed in (\ref{eq:Dynamics}). For the
servicer spacecraft, masses and cross-sectional areas are sampled
from the uniform distributions $U\left(25,1500\right)\ kg$ \LyXThinSpace{}
and $U\left(1,50\right)\ m^{2}$, respectively. The properties of
all spacecraft are listed in Table \ref{tab:Result-Comparison}. The
measurement model follows (\ref{eq:measurement model}) and includes
additive Gaussian noise with zero mean and a standard deviation of
$\sqrt{0.5}$ meters. 

A 3D visualization of the trajectories of all servicer spacecraft
and the defunct spacecraft is shown in Fig. \ref{fig:Plots}. The
visualization is limited to the first 60 seconds of the simulation
to provide a detailed view of the initial trajectory dynamics. This
depiction illustrates the cooperative behavior of the servicer spacecraft
as they maneuver within the operational region.

The performance of the neighborhood velocity estimation is shown in
Fig. \ref{fig:Plots-1}, which shows the norm of the estimation error
$\left\Vert \tilde{\zeta}_{i}\right\Vert $ over time for all servicer
spacecraft. The results indicate that all servicer spacecraft achieve
steady-state estimation error values of approximately $5\ m/s$ after
about 150 seconds. The initial transient spikes observed within the
first 10 seconds are attributed to the zero-value initialization of
the observer. This transient response could be mitigated by initializing
the observer with prior estimates of the velocity, thereby improving
the convergence characteristics.

The tracking performance of the servicer spacecraft is shown in Fig.
\ref{fig:Plots-1-1}, which shows the norm of the tracking error $\left\Vert e_{i}\right\Vert $
over time. The plot shows that all servicer spacecraft achieve steady-state
tracking error values of approximately $5$ meters after approximately
200 seconds. This level of precision is considered sufficient to initiate
servicing operations. The oscillatory behavior observed during the
steady-state phase can be attributed to the choice of control gains.
While adjustments to the gain parameters could reduce oscillations,
such modifications would likely increase control effort, necessitating
a trade-off between control performance and energy efficiency. 

\section{Conclusion}

This work addresses challenges in spacecraft servicing by introducing
a distributed state estimation and tracking framework that relies
solely on relative position measurements and operates effectively
under partial state information. The proposed $\rho$-filter reconstructs
unknown states using locally available data, while a Lyapunov-based
deep neural network adaptive controller compensates for uncertainties
arising from unknown spacecraft dynamics. To ensure a well-posed collaborative
spacecraft regulation problem, a trackability condition is established.
Lyapunov-based stability analysis guarantees exponential error convergence
in state estimation and spacecraft regulation to a neighborhood of
the origin, contingent on this condition. Empirical validation is
demonstrated through a simulation involving a network of six servicer
spacecraft tasked with tracking a single defunct spacecraft. The servicer
spacecraft achieve steady-state tracking errors of approximately 5
meters within 200 seconds, meeting precision requirements for initiating
servicing operations.

\section*{Statements and Declarations}

\subsection*{Funding}
This research is supported in part by AFOSR grant FA9550-19-1-0169. Any opinions, findings, and conclusions or recommendations expressed in this material are those of the author(s) and do not necessarily reflect the views of the sponsoring agency.

\subsection*{Conflict of Interest}
On behalf of all authors, the corresponding author states that there
is no conflict of interest.

\subsection*{Consent for Publication}
This work has been approved for public release: distribution unlimited. Case Number AFRL-2024-5871.

\subsection*{Code Availability}
The code used in this study is available on request.

\bibliographystyle{sn-nature}
\bibliography{sources.bib}

\end{document}